\newcolumntype{P}[1]{>{\centering\arraybackslash}p{#1}}
\newtheorem{example}{Example}
\newtheorem{theorem}{Theorem}[section]
\newtheorem{lemma}{Lemma}[section]
\begin{document}
\begin{center}
 {{\bf \large {\rm {\bf  Compact finite difference method for pricing European and American options under jump-diffusion models}}}}
\end{center}
\begin{center}
{\textmd {{\bf Kuldip Singh Patel,}}}\footnote{\it Department of Mathematics, Indian Institute of Technology, Delhi, India, (kuldip@maths.iitd.ac.in)}
{\textmd {{\bf Mani Mehra}}}\footnote{\it Department of Mathematics, Indian Institute of Technology, Delhi, India, {(mmehra@maths.iitd.ac.in)} }
\end{center}
\begin{abstract}
In this article, a compact finite difference method is proposed for pricing European and American options under jump-diffusion models. Partial integro-differential equation and linear complementary problem governing European and American options respectively are discretized using Crank-Nicolson Leap-Frog scheme. In proposed compact finite difference method, the second derivative is approximated by the value of unknowns and their first derivative approximations which allow us to obtain a tri-diagonal system of linear equations for the fully discrete problem. Further, consistency and stability for the fully discrete problem are also proved. Since jump-diffusion models do not have smooth initial conditions, the smoothing operators are employed to ensure fourth-order convergence rate. Numerical illustrations for pricing European and American options under Merton jump-diffusion model are presented to validate the theoretical results.
\end{abstract}
\begin{center}
{\bf Keywords:} Compact finite difference method; European and American options; jump-diffusion models; operator splitting technique.
\end{center}
\section{Introduction}\label{sec:intro}
\par F. Black and M. Scholes \cite{BlaS73} derived a partial differential equation (PDE) governing the option prices in the stock market with the assumption that the dynamics of underlying asset is driven by geometric Brownian motion with constant volatility. Though Black-Scholes model is a seminal work in option pricing, numerous studies found that these assumptions are inconsistent with the market price movements. Therefore, various approaches have been considered to overcome the shortcomings of Black-Scholes model. In one of these approaches, Merton \cite{Mer76} incorporated the jumps into the dynamics of underlying asset in order to determine the volatility skews and it is known as Merton jump-diffusion model. In another approach, S. L. Heston \cite{Hes93} considered the volatility to be a stochastic process and this model is known as stochastic volatility model. Apart from these, Dupire \cite{Dup94} considered the volatility to be a deterministic function of time and stock price. Further, Bates \cite{Bates96} combined the jump-diffusion model with stochastic volatility approach to capture the typical features of market option prices. Anderson and Andreasen \cite{Ander00} combined the deterministic volatility function approach with jump-diffusion model and proposed a second-order accurate numerical method for valuation of options.
\par The prices of European options under jump-diffusion models can be evaluated by solving a partial integro-differential equation (PIDE), whereas a linear complementary problem (LCP) is solved for the evaluation of American options. Let us introduce some existing literature on numerical methods for the solution of the PIDE and LCP. Cont and Voltchkova \cite{Cont05} used implicit-explicit (IMEX) scheme for pricing European and barrier options and proved the stability and convergence of the proposed scheme. d'Halluin et al. \cite{Halluin05} proposed a second-order accurate implicit method for pricing European and American options which uses fast Fourier transform (FFT) for the evaluation of convolution integral. They also proved the stability and the convergence of the fixed-point iteration method. An excellent comparison of various approaches for jump-diffusion models is given in \cite{Duffy05}. A three-time levels second-order accurate implicit method using finite difference approximations is proposed for European and American put options under jump-diffusion models in \cite{Kwon11} and \cite{Kwonamer11} respectively. Salmi et al. \cite{SaToSy14} proposed a second-order accurate IMEX time semi-discretization scheme for pricing European and American options under Bates model. They explicitly treated the jump term using the second-order Adams-Bashforth method and rest of the terms are discretized implicitly using the Crank-Nicolson method.
\par It is observed that the inclusion of more grid points in computation stencil in order to increase the accuracy of finite difference approximations becomes computationally expensive. Therefore, finite difference approximations have been developed using compact stencils (commonly known as compact finite difference approximations) at the expense of some complication in their evaluation. Compact finite difference approximations provide high-order accuracy and better resolution characteristics as compared to finite difference approximations for equal number of grid points \cite{Lele92}. A detailed study about various order compact approximations is presented in \cite{MKPACM17}. Compact finite difference approximations have also been used for option pricing problems \cite{TanGB08, HWSun11}.
\par The majority of numerical approaches \cite{Halluin05, Kwon11, Kwonamer11, SaToSy14} proposed to price European and American options under jump-diffusion models are based on second-order discretization methods. Nevertheless, high-order approximations are not customary tools for option pricing because initial conditions for option pricing are always non-smooth. As a result, it will affect the convergence rate of high-order methods. Various approaches e.g. co-ordinate transformation \cite{TanGB08} and local mesh refinements \cite{HWSun11} have been considered for option pricing problems to achieve high-order convergence rate even for non-smooth initial conditions. These approaches suffer with certain drawbacks e.g. it is not always easy to define a coordinate transformation for PIDE and the stability results for using local mesh refinement are not straight forward. Therefore as another approach, we apply smoothing operator to the initial conditions to obtain high-order convergence rate even for non-smooth initial conditions \cite{thomee70}.
\par In this article, a compact finite difference method is proposed to solve the PIDE and LCP for pricing European and American options under jump-diffusion models. The novelty of the proposed compact finite difference method is that it does not require the original equation as an auxiliary equation unlike the compact scheme proposed in \cite{HWSun11}. The consistency and stability of the proposed compact finite difference method are proved. Since initial conditions for jump-diffusion models have low regularity, the smoothing operator given in \cite{thomee70} is employed to smoothen the initial conditions in order to achieve the high-order convergence rate. Further, the CPU times for a given accuracy with proposed compact finite difference method and finite difference method are calculated and it is shown that proposed compact finite difference method outperforms the finite difference method.
\par The outline of the paper is as follows. The continuous model problem is discussed in Sec.~\ref{sec:conti_prob}. In Sec.~\ref{sec:compact}, compact finite difference approximations for first and second derivatives are discussed and Fourier analysis of errors is presented. In Sec.~\ref{sec:disc_prob}, compact finite difference method for pricing European and American options is proposed. Consistence and stability analysis for European options is discussed in Sec.~\ref{sec:analysis}. In Sec.~\ref{sec:numerical}, numerical examples are presented to validate the theoretical results. Finally, conclusions and some future work are discussed in Sec.~\ref{sec:conclu}.
\section{The mathematical model}
\label{sec:conti_prob}
A brief discussion on continuous problems for pricing European and American options under jump-diffusion models is presented in this section. Let us consider that stock price process of an underlying asset follows an exponential jump-diffusion model, i.e.
\[
S_{t}=S_{0}e^{rt+X_{t}},
\]
where $S_{0}$ is the stock price at $t=0$, $r$ is the risk-free interest rate and $(X_{t})_{t\geq 0}$ a jump-diffusion L$\acute{e}$vy process \cite{Kwon11}. The jump-diffusion L$\acute{e}$vy process $(X_{t})_{t\geq 0}$ is defined as
\begin{equation}
\label{eq:jdprocess}
X_{t}:=at+\sigma W_{t}+\sum_{i=1}^{N_{t}} G_{i},
\end{equation}
where  $a$ and $\sigma > 0$ are real constants, $(W_{t})_{t\geq 0}$ is Brownian motion, $(N_{t})_{t \geq 0}$ is Poisson process and $G_{i}$ are identically and independent distributed random variables. Further, the random variables $G_{i}$ follows Gaussian distribution in case of Merton jump-diffusion model. The price of European options under jump-diffusion models ($V(S,t)$) is obtained by solving a PIDE which is discussed in the following theorem \cite{Kwon11}.
\begin{theorem} Let the L$\acute{e}$vy process $\left(X_{t}\right)_{t\geq 0}$ has the L$\acute{e}$vy triplet $(\sigma^2, \gamma, \nu)$, where $\sigma >0$, $\gamma \in \mathbb{R}$
	and $\nu$ is the L$\acute{e}$vy measure. If
	\[
	\sigma>0 \:\: or \;\: \exists \:\: \beta \in (0,2)\:\:\:\: \mbox{such that} \:\:\: \liminf \limits_{\epsilon\rightarrow 0} \epsilon^{-\beta}\int_{-\epsilon}^{\epsilon} |x|^2 \nu(dx) >0,
	\]
	then the value of European option with the payoff function $Z(S_{T})$ is obtained by $V(S,t)$, where
	\[
	V:[0,\infty)\times[0,T] \rightarrow \mathbb{R},
	\]
	\[
	(S,t)\mapsto V(S,t)=e^{-r(T-t)}\mathbb{E}[Z(S_{T})|S_{t}=S],
	\]
	is a continuous map on $[0,\infty)\times[0,T]$, $C^{1,2}$ on $(0,\infty)\times(0,T)$, and satisfies the following PIDE
	\begin{equation}
	\begin{split}
	\label{eq:PIDE}
	-\frac{\partial V}{\partial t}(S,t)&=\frac{\sigma^2 S^2}{2}\frac{\partial^2 V}{\partial S^2}(S,t)+rS\frac{\partial V}{\partial S}(S,t)-rV(S,t)\\
	&+\int_{\mathbb{R}}^{}\left[ V(Se^{x},t)-V(S,t)-S(e^x-1)\frac{\partial V}{\partial S}(S,t)\right]\nu(dx),
	\end{split}
	\end{equation}
	on $(0,\infty)\times[0,T)$ with the final condition
	\[
	V(S,T)=Z(S) \:\:\:\: \forall \:\: S>0.
	\]
\end{theorem}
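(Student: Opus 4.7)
The plan is to combine a Feynman--Kac / martingale argument with It\^o's formula for L\'evy processes, and to use the ellipticity-type hypothesis to upgrade the minimal regularity of $V$ to $C^{1,2}$ so that the PIDE holds pointwise.

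First, I would work with the discounted value process $M_t := e^{-rt}\mathbb{E}[Z(S_T)\,|\,\mathcal{F}_t] = e^{-rt}V(S_t,t)$, which is a martingale by the tower property. Writing $S_t = S_0 e^{rt+X_t}$ and using that $(X_t)$ is a L\'evy process with triplet $(\sigma^2,\gamma,\nu)$, I would first establish continuity of $V$ on $[0,\infty)\times[0,T]$ by a direct argument: continuity in $S$ follows from the scaling structure $S_t = S\,e^{r(t-s)+(X_t-X_s)}$ and dominated convergence (using whatever integrability of $Z$ under $\nu$ is implicit), while continuity in $t$ uses the stochastic continuity of $X_t$ together with right-continuity of paths.

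Next, to obtain $C^{1,2}$ regularity on $(0,\infty)\times(0,T)$, I would invoke the smoothing of the transition semigroup of $X_t$. The hypothesis either provides a nondegenerate Gaussian component ($\sigma>0$), in which case parabolic regularity for the (integro-)differential generator gives a smooth density, or provides the Orey-type condition $\liminf_{\epsilon\to 0}\epsilon^{-\beta}\int_{-\epsilon}^{\epsilon}|x|^2\,\nu(dx)>0$, which, via the standard Fourier-analytic estimate on the characteristic exponent $\psi$ of $X_t$, shows $|e^{-t\psi(\xi)}|$ decays faster than any polynomial in $\xi$. In either case, $P_{T-t}Z$ becomes smooth for $t<T$ even though $Z$ is only continuous; translating back to $V(S,t)=e^{-r(T-t)}P_{T-t}Z$ via the logarithmic change of variable gives $C^{1,2}$ regularity on $(0,\infty)\times(0,T)$.

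With $V\in C^{1,2}$, I would then apply It\^o's formula for semimartingales with jumps to $e^{-rt}V(S_t,t)$. Separating the continuous martingale part, the compensated jump martingale part, and the finite-variation drift part, the martingale property of $M_t$ forces the drift part to vanish. The drift is computed as
\[
e^{-rt}\Bigl[-rV+\tfrac{\partial V}{\partial t}+rS\tfrac{\partial V}{\partial S}+\tfrac{\sigma^{2}S^{2}}{2}\tfrac{\partial^{2}V}{\partial S^{2}}+\!\int_{\mathbb{R}}\!\bigl(V(Se^{x},t)-V(S,t)-S(e^{x}-1)\tfrac{\partial V}{\partial S}\bigr)\nu(dx)\Bigr],
\]
where the integral term comes from the jump compensator together with the $S(e^x-1)V_S$ correction needed to compensate the large jumps using the drift $\gamma$ of $X$. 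Setting this drift to zero yields the PIDE~\eqref{eq:PIDE}. The final condition $V(S,T)=Z(S)$ is immediate from the definition.

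The main obstacle is the regularity step: showing that $V$ is genuinely $C^{1,2}$ despite the non-smoothness of typical payoffs $Z$ (e.g.\ call/put). Handling the purely jump case where $\sigma=0$ requires the Orey condition precisely to guarantee integrability of $|\xi|^{2}|\widehat{Z}(\xi)|e^{-(T-t)\mathrm{Re}\,\psi(\xi)}$, which legitimizes differentiating under the expectation; one also has to verify that the compensator integral is well-defined (integrability of $(|x|^{2}\wedge 1)$ against $\nu$ plus local boundedness of $V$ and its $S$-derivative in a neighborhood, together with the $-S(e^{x}-1)V_S$ term that tames the singularity of $\nu$ near zero). Once these integrability checks are in place, the It\^o computation is mechanical.
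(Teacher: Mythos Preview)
The paper does not actually prove this theorem: it is quoted as a known result from the literature (the sentence introducing it reads ``\ldots which is discussed in the following theorem \cite{Kwon11}''), and no proof is given. So there is no ``paper's own proof'' to compare your proposal against.

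That said, your outline is the standard route for this kind of Feynman--Kac representation (essentially the argument in Cont--Tankov, \emph{Financial Modelling with Jump Processes}, Proposition~12.1): regularity of $V$ from the smoothing property of the L\'evy semigroup under the stated hypothesis, then It\^o's formula for jump processes applied to $e^{-rt}V(S_t,t)$, and identification of the PIDE from the vanishing drift of the resulting martingale. The obstacles you flag---regularity when $\sigma=0$ via the Orey-type condition on $\nu$, and integrability of the compensator term near the origin of $\nu$---are exactly the right ones, and your sketch of how to handle them is sound. If you wanted to turn this into a full proof you would need to be a bit more careful about the growth of $Z$ (Lipschitz is the usual assumption) to justify dominated convergence and the differentiation under the expectation, but nothing in your plan is wrong.
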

Let us consider the following transformation in the above PIDE (\ref{eq:PIDE})
\[
\tau=T-t, \:x=ln\left(\frac{S}{S_{0}}\right) \:\mbox{and} \:u(x,\tau) = V(S_{0}e^{x},T-\tau).
\]
Then, $u(x,\tau)$ is the solution of the following PIDE with constant coefficients
\begin{equation}
\begin{split}
\label{eq:pidefinal}
\frac{\partial u}{\partial \tau}(x,\tau)&=\mathbb{L}u,\: (x,\tau)\in \:(-\infty,\infty)\times(0,T],\\
u(x,0) &= f(x)\:\:\: \forall \:\:\: x \in (-\infty, \infty),
\end{split}
\end{equation}
where
\small
\begin{equation}
\label{eq:operator}
\mathbb{L}u=\frac{\sigma^2}{2}\frac{\partial^2 u}{\partial x^2}(x,\tau)+\left(r-\frac{\sigma^2}{2}-\lambda \zeta\right)\frac{\partial u}{\partial x}(x,\tau)-(r+\lambda)u(x,\tau)+\lambda \int_{\mathbb{R}}^{} u(y,\tau)g(y-x)dy,
\end{equation}
\normalsize
$\lambda$ is the intensity of the jump sizes and $\zeta$ = $\int_{\mathbb{R}}^{} (e^x-1)g(x)dx$.
\par Further, the LCP for American options is written as
\[
\frac{\partial u}{\partial \tau}(x,\tau)-\mathbb{L}u(x,\tau) \geq 0,
\]
\begin{equation}
\label{eq:american}
u(x,\tau)\geq f(x),
\end{equation}
\[
\left(\frac{\partial u}{\partial \tau}(x,\tau)-\mathbb{L}u(x,\tau)\right) \left(u\left(x,\tau\right)-f(x)\right)= 0,
\]
for all $(x,\tau)\in (-\infty,\infty)\times(0,T]$ and $\mathbb{L}$ is given in Eq.~(\ref{eq:operator}). The initial condition for European put options is
\begin{equation}
\label{eq:initial_eur}
f(x)= max(K-S_{0}e^{x},0) \:\:\: \forall \:\:\:x \in \mathbb{R},
\end{equation}
and the asymptotic behaviour of European put options is described as
\begin{equation}
\label{eq:boundary_eur}
\lim_{x\rightarrow -\infty}[u(x,\tau)-(Ke^{-r\tau}-S_{0}e^{x})]=0\:\:\:\:\:\mbox{and} \:\:\:\:\:\lim_{x\rightarrow \infty}u(x,\tau)=0.
\end{equation}
Similarly, the initial condition for American put options is
\begin{equation}
\label{eq:initial_amer}
f(x)= max(K-S_{0}e^{x},0) \:\:\: \forall \:\:\:x \in \mathbb{R},
\end{equation}
and the equations describing the asymptotic behaviour of European call options are
\begin{equation}
\label{eq:boundary_amer}
\lim_{x\rightarrow -\infty}[u(x,\tau)-(K-S_{0}e^{x})]=0\:\:\:\:\:\mbox{and} \:\:\:\:\:\lim_{x\rightarrow \infty}u(x,\tau)=0.
\end{equation}
\section{Compact finite difference approximations for first and second derivatives}
\label{sec:compact}
Let us consider the fourth-order compact finite difference approximations for first and second derivatives \cite{Lele92} of function $u$ as follows
\begin{align}
\frac{1}{4}u_{x_{i-1}}+u_{x_{i}}+\frac{1}{4}u_{x_{i+1}} &= \frac{1}{\delta x}\left[-\frac{3}{4}u_{i-1}+\frac{3}{4}u_{i+1}\right]
\label{eq:firstc_4},\\
\frac{1}{10}u_{xx_{i-1}}+u_{xx_{i}}+\frac{1}{10}u_{xx_{i+1}} &= \frac{1}{\delta x^2}\left[\frac{6}{5}u_{i-1}-\frac{12}{5}u_{i}+\frac{6}{5}u_{i+1}\right]\label{eq:pade2_4},
\end{align}
where $u_{x_{i}}$ and $u_{xx_{i}}$ represents first and second derivatives approximations of unknown $u$ at grid point $x_{i}$. If $\Delta_{x}u_{i}$ and $\Delta_{xx}u_{i}$ represent second-order finite difference approximation for first and second derivative respectively, then we may write
\begin{equation}
\label{eq:firstf_2}
\Delta_{x}u_{i}=\frac{u_{i+1}-u_{i-1}}{2\delta x},\:\:\:\Delta_{xx}u_{i}=\frac{u_{i+1}-2u_{i}+u_{i-1}}{\delta x^2}.
\end{equation}
If the first derivative of unknowns are also considered as variables then Eq.~(\ref{eq:firstc_4}) can be written as
\begin{equation}
\label{eq:pade3_4}
\frac{1}{4}u_{xx_{i-1}}+u_{xx_{i}}+\frac{1}{4}u_{xx_{i+1}}= \frac{1}{\delta x}\left[-\frac{3}{4}u_{x_{i-1}}+\frac{3}{4}u_{x_{i+1}}\right].
\end{equation}
Eliminating $u_{xx_{i-1}}$ and $u_{xx_{i+1}}$ from Eqs.~(\ref{eq:pade2_4}) and~(\ref{eq:pade3_4}) and using Eq.~(\ref{eq:firstf_2}) we have
\begin{equation}
\label{eq:secondc_4}
u_{xx_{i}} = 2\Delta_{xx}u_{i}-\Delta_{x}u_{x_{i}}.
\end{equation}
In this way, compact finite difference approximation for second derivative is expressed in terms of the value of the functions and their first derivative approximations. The value of $u_{x_{i}}$ in Eq.~(\ref{eq:secondc_4}) is obtained from Eq.~(\ref{eq:firstc_4}). In case of non-periodic boundary conditions, fourth order accurate one-sided compact finite difference approximation for first derivative at boundary point can be obtained from \cite{Tian11}. It can be observed from Fig.~\ref{fig:grid} that lesser number of grid points are needed to achieve high-order accuracy as compared to finite difference approximations.
\begin{figure}[h!]
  \begin{center}
  \includegraphics[trim = 0cm 20cm 0cm 3cm, clip, width=1\textwidth]{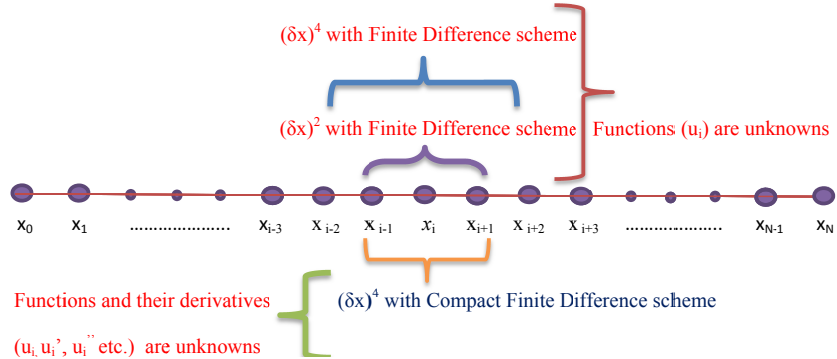}
  \caption{Number of grid points required for first derivative approximation using compact finite difference approximation and finite difference approximation.}
  \label{fig:grid}
  \end{center}
\end{figure}
\par The emphasis in this section now is on the resolution characteristics of the compact finite difference approximations of first and second derivatives rather than its truncation error. Fourier analysis is used to obtain the dispersion and dissipation errors which quantify the resolution characteristics of difference approximations. In order to discuss the Fourier analysis of errors, the dependent variable $(u(x))$ is assumed to be periodic over the domain $[0,L]$ of the independent variable $(x)$ and write
\begin{equation}\label{eq:fourier}
u(x)=\sum_{k=-N/2}^{N/2}\hat{u}_{k}e^{\frac{2\pi I k x}{L}},
\end{equation}
where $I=\sqrt{-1}$. Now, the Fourier modes are defined as $e^{I\omega s}$ where $\omega=\frac{2\pi k\delta x}{L}$ is the wavenumber, $N$ is the number of grid points and $s=\frac{x}{\delta x}$ is the scaled coordinate. The exact first and second derivatives of Eq.~(\ref{eq:fourier}) provide functions with Fourier coefficients $\hat{u}'_{k}=I \omega \hat{u}_{k}$ and $\hat{u}''_{k}= -\omega^2 \hat{u}_{k}$ respectively. The differencing errors are obtained by comparing the Fourier coefficients of the exact derivatives with the Fourier coefficients of first and second derivative approximations. If $\omega'$ and $\omega''$ represent the modified wavenumbers for first and second derivatives respectively then following relations \cite{KPMD17} are obtained for various difference approximations:
\small
\begin{equation}\label{eq:integer1}
\omega' = \left\{
\begin{array}{lr}
sin(\omega) & : \mbox{$O(\delta x^2)$ finite difference approximation},\\
\frac{-sin\left(2\omega\right)}{6}+\frac{4sin(\omega)}{3} & : \mbox{$O(\delta x^4)$ finite difference approximation},\\
\frac{3sin(\omega)}{2+cos(\omega)} & : \mbox{$O(\delta x^4)$ compact finite difference approximation}.
\end{array}
\right.
\end{equation}
\begin{equation}\label{eq:integer2}
\omega'' = \left\{
\begin{array}{lr}
2-2cos(\omega) & : \mbox{$O(\delta x^2)$ finite difference approximation},\\
\frac{cos(2\omega)}{6}-\frac{8cos(\omega)}{3}+\frac{5}{2} & : \mbox{$O(\delta x^4)$ finite difference approximation}.\\
\frac{12(1-cos(\omega))}{2+cos(\omega)} & : \mbox{$O(\delta x^4)$ compact finite difference approximation(Eq.~(\ref{eq:pade2_4}))},\\
\frac{5-4cos(\omega)-cos^{2}(\omega)}{2+cos(\omega)} & : \mbox{$O(\delta x^4)$ compact finite difference approximation (Eq.~(\ref{eq:secondc_4}))}.
\end{array}
\right.
\end{equation}
\normalsize
\begin{figure}[h!]
	\begin{center}
		\subfigure[]{%
			\includegraphics[scale=0.45]{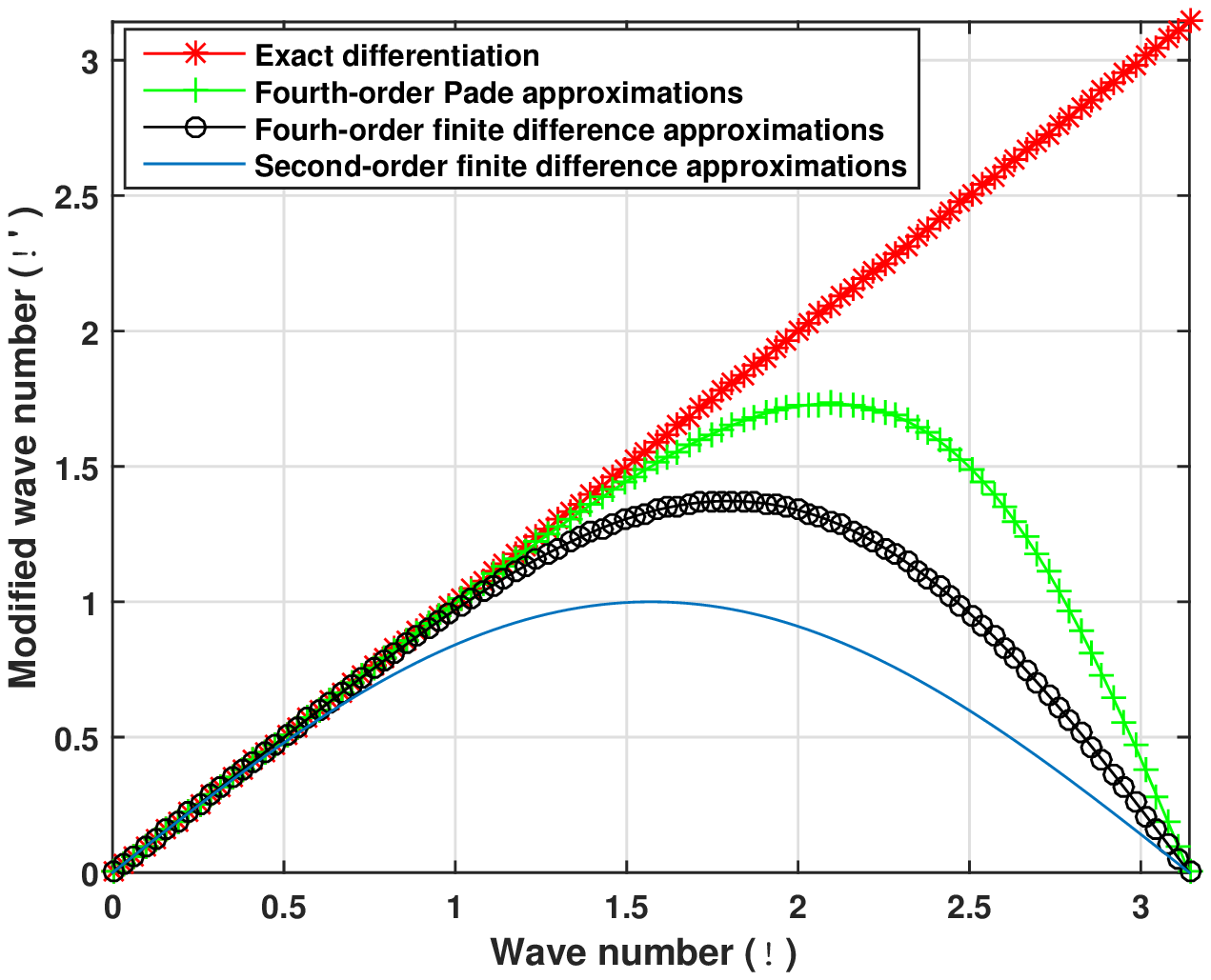}
			\label{fig:fou_first}}%
		\subfigure[]{%
			\includegraphics[scale=0.45]{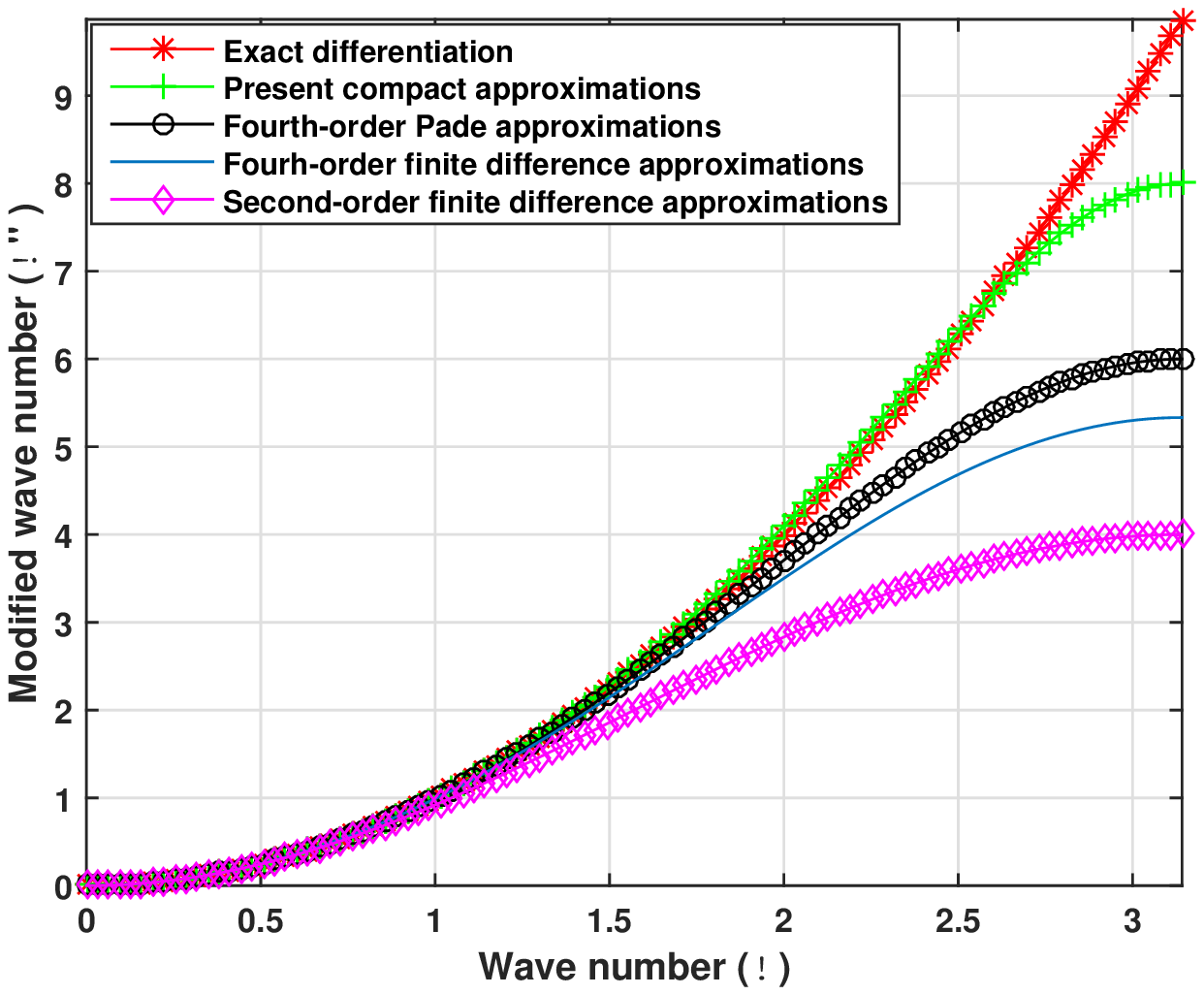}
			\label{fig:fou_second}}
		\caption{The wave number and modified wave number for various finite difference approximations: (a) First derivative approximation, (b). Second derivative approximation.}
		\label{fig:fou}
	\end{center}
\end{figure}
\par The differences between real parts of the wave number and modified wave numbers $(Re(\omega'-\omega), Re(\omega''-\omega))$ and the imaginary parts of the wavenumber and modified wavenumbers $(Im(\omega'-\omega), Im(\omega''-\omega))$ represent the dispersion and dissipation errors respectively. Since all the difference approximations discussed in Sec.\ref{sec:compact} are of the central difference form, there are no dissipation errors involved. The wave numbers versus modified wave numbers for first and second derivatives (given in Eqs.~(\ref{eq:integer1}) and~(\ref{eq:integer2})) are plotted in Figures~(\ref{fig:fou_first}) and~(\ref{fig:fou_second}) respectively. It is observed that fourth-order compact finite difference approximations have lesser dispersion error as compared to the finite difference approximations. Moreover, we observe that the proposed compact finite difference approximation for second derivative $(\ref{eq:secondc_4})$ has better resolution characteristics as compared to Pad$\acute{e}$ approximation ($\ref{eq:pade2_4}$).
\section{Compact finite difference method}
\label{sec:disc_prob}
\subsection{Localization to bounded domain}
\label{ssec:localization}
The domain of the spatial variable is restricted to a bounded interval $\Omega=(-L,L)$ for some fixed real number $L$ to solve the PIDE (\ref{eq:pidefinal}) numerically. For given positive integers $M$ and $N$, let $\delta x=2L/N$
and $\delta \tau=T/M$ and in this way we define $x_{n}=-L+n\delta x$ $(n=0,1,....,N)$
and $\tau_{m}=m\delta \tau$ $(m=0,1,...,M)$. Cont and Voltchkova \cite{Cont05} proved that truncation error after localization decreases exponentially point-wise. Further, Matache et al. \cite{Mat05} also proved an exponential bound in $L_{2}$-norm on truncation error. Now, the PIDE~(\ref{eq:pidefinal}) can be written as
\begin{equation}
\label{eq:discre_1}
\frac{\partial u(x,\tau)}{\partial \tau}=\mathbb{D}u(x,\tau)+\mathbb{I}u(x,\tau),\:\:\: (x,\tau) \in \Omega \times [0,T),
\end{equation}
where $\mathbb{D}$ corresponds to the differential operator and $\mathbb{I}$ represents the integral operator. The operators $\mathbb{D}$ and $\mathbb{I}$ are as follows
\begin{equation}
\label{eq:all_operator}
\begin{split}
\mathbb{D}u(x,\tau) &= \frac{\sigma^2}{2}\frac{\partial^2 u}{\partial x^2}(x,\tau)+\left(r-\frac{\sigma^2}{2}-\lambda \zeta\right)\frac{\partial u}{\partial x}(x,\tau)-(r+\lambda)u(x,\tau),\\
\mathbb{I}u(x,\tau) &= \lambda \int_{\mathbb{R}}^{}u(y,\tau)g(y-x)dy.
\end{split}
\end{equation}
\subsection{Temporal semi-discretization}
\label{ssec:temp_disc}
\par Crank-Nicolson Leap-Frog scheme is used for time semi-discretization of Eq.~(\ref{eq:discre_1}) as follows:
\begin{equation}
\label{eq:temp_discre1}
\frac{u^{m+1}-u^{m-1}}{2\delta\tau}=\mathbb{D}\left(\frac{u^{m+1}+u^{m-1}}{2}\right)+\lambda \mathbb{I}(u^{m}),\:\:\: m \geq 1,
\end{equation}
\begin{equation*}
\label{eq:temp_discre2}
u^{m+1}(x_{min})=K(e^{-r\tau_{m+1}}-e^{x_{min}}),\:\:\: u^{m+1}(x_{max})=0.
\end{equation*}
It is known that for all $v(.,\tau) \in \mathbb{L}^2(\Omega)$, $\tau\in(0,T)$ which is defined as
\begin{equation*}
v(x,\tau) = \left\{
  \begin{array}{lr}
    u(x,\tau) & : (x,\tau)\in \Omega \times [0,T],\\
    0 & : (x,\tau)\in \Omega^c \times [0,T],
  \end{array}
\right.
\end{equation*}
the integral operator satisfies the following condition
\begin{equation}
\label{eq:integral}
||\mathbb{I}v(.,\tau)||\leq C_1||u(.,\tau)||,
\end{equation}
where $C_1$ is a constant independent of $\tau$. Let us suppose $u^{m}$ and $\tilde{u}^{m}$ represents the exact and approximate solution of the Eq.~(\ref{eq:temp_discre1}) respectively with error $e^{m}:=u^{m}-\tilde{u}^{m}$. In order to prove the stability for temporal semi-discretization, following theorem is proved.
\begin{theorem} There exist a constant $\gamma$ such that $\forall$ $\delta \tau < \frac{1}{\gamma}$, we have
\begin{equation}
||e^{i}||^2\leq C||e^0||^2,\:\:\: \forall\:\: 2 \leq i \leq M,
\end{equation}
where $C$ is a constant depends on $r$, $\sigma$, $\lambda$, $T$ and $C_1$.
\end{theorem}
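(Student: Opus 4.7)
The approach is an energy argument adapted to the IMEX/leap-frog structure of Eq.~(\ref{eq:temp_discre1}). Since the scheme is linear and $u^m$, $\tilde{u}^m$ satisfy identical boundary conditions, the error $e^m$ satisfies the same recurrence with homogeneous Dirichlet data on $\partial\Omega$. The plan is to take the $\mathbb{L}^2(\Omega)$ inner product of the error equation with $(e^{m+1}+e^{m-1})$; the left-hand side telescopes into $(\|e^{m+1}\|^2-\|e^{m-1}\|^2)/(2\delta\tau)$, so we are left with mixed terms on the right that must be controlled.

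For the Crank--Nicolson part, integration by parts under the homogeneous boundary conditions yields
\begin{equation*}
\langle\mathbb{D}v,v\rangle \;=\; -\tfrac{\sigma^{2}}{2}\|v_{x}\|^{2} \;-\; (r+\lambda)\|v\|^{2} \;\leq\; 0,
\end{equation*}
because the convective contribution $\int v_x v\,dx = \tfrac{1}{2}[v^{2}]_{-L}^{L}$ vanishes. Hence the implicit term with $v=(e^{m+1}+e^{m-1})/2$ can simply be discarded. For the explicit jump term, Cauchy--Schwarz together with the bound (\ref{eq:integral}) and $2ab\leq a^{2}+b^{2}$ give
\begin{equation*}
\lambda\,|\langle \mathbb{I}e^{m},\,e^{m+1}+e^{m-1}\rangle| \;\leq\; \lambda C_{1}\,\|e^{m}\|\bigl(\|e^{m+1}\|+\|e^{m-1}\|\bigr) \;\leq\; \tfrac{\lambda C_{1}}{2}\bigl(\|e^{m+1}\|^{2}+2\|e^{m}\|^{2}+\|e^{m-1}\|^{2}\bigr).
\end{equation*}
Setting $\gamma:=\lambda C_{1}$ and rearranging produces the two-step inequality $(1-\gamma\delta\tau)\|e^{m+1}\|^{2} \leq (1+\gamma\delta\tau)\|e^{m-1}\|^{2} + 2\gamma\delta\tau\|e^{m}\|^{2}$, valid whenever $\delta\tau<1/\gamma$.

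To convert this leap-frog recurrence into a standard Gronwall estimate, I would introduce the composite energy $F^{m}:=\|e^{m}\|^{2}+\|e^{m-1}\|^{2}$; adding $(1-\gamma\delta\tau)\|e^{m}\|^{2}$ to both sides of the previous inequality and dividing gives $F^{m+1}\leq \bigl(1+\gamma\delta\tau\bigr)/\bigl(1-\gamma\delta\tau\bigr)\,F^{m}\leq (1+c\delta\tau)F^{m}$ for some $c=c(r,\sigma,\lambda,C_{1})$ once $\delta\tau<1/(2\gamma)$. Iterating yields $F^{m}\leq e^{cT}F^{1}$, and a standard one-step startup analysis (e.g.\ a Crank--Nicolson or implicit Euler initial step, whose stability depends on $r$ and $\sigma$) bounds $\|e^{1}\|^{2}$ by a multiple of $\|e^{0}\|^{2}$. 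Combining these two estimates delivers $\|e^{i}\|^{2}\leq C\|e^{0}\|^{2}$ for all $2\leq i\leq M$ with $C$ depending on $r,\sigma,\lambda,T,C_{1}$, as claimed.

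The principal obstacle is precisely the three-level coupling of $e^{m+1}$, $e^{m}$, $e^{m-1}$ introduced by the leap-frog treatment of $\mathbb{I}$: the classical discrete Gronwall lemma applies to one-step recurrences, so without the regrouping into $F^{m}$ (equivalently, passing to a first-order system in the pair $(e^{m-1},e^{m})$) one cannot close the estimate. Notably, the CFL-type restriction $\delta\tau<1/\gamma$ emerges naturally from the need to absorb the $\|e^{m+1}\|^{2}$ term produced by the explicit jump discretisation, consistent with the well-known weak instability of pure leap-frog; here it is tamed because the jump operator is a bounded perturbation.
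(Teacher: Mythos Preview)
Your proof is correct and follows the same overall energy strategy as the paper (test the error equation against $e^{m+1}+e^{m-1}$, telescope the left-hand side, control the right), but you diverge from the paper in two places.

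First, for the implicit differential part you simply observe that under homogeneous Dirichlet data the convective contribution $\langle v_x,v\rangle$ vanishes, so $\langle\mathbb{D}v,v\rangle\le 0$ and the whole term may be dropped. The paper instead keeps the first-order term, completes the square, and carries a constant $Z=\bigl((r-\tfrac{\sigma^2}{2}-\lambda\zeta)^2-2(r+\lambda)\sigma^2\bigr)/(4\sigma^2)$ into the estimate; this makes their $\gamma$ depend on $r,\sigma,\zeta$, whereas yours is just $\lambda C_1$. Your route is cleaner and yields a sharper step-size restriction.

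Second, to close the three-level recursion you introduce the composite energy $F^{m}=\|e^{m}\|^{2}+\|e^{m-1}\|^{2}$ and iterate $F^{m+1}\le\frac{1+\gamma\delta\tau}{1-\gamma\delta\tau}F^{m}$ directly. The paper instead sums the two-step inequality over odd indices to telescope back to $\|e^{0}\|^{2}$, obtaining $\|e^{i}\|^{2}\le\|e^{0}\|^{2}+\gamma\delta\tau\sum_{m=0}^{i}\|e^{m}\|^{2}$, and then invokes the discrete Gronwall inequality. Both devices are standard for leap-frog schemes; yours avoids Gronwall at the cost of an explicit startup bound on $\|e^{1}\|$, which the paper sidesteps (somewhat implicitly) by telescoping the odd-index sum all the way down to $m=0$.
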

\begin{proof}
The error equation for temporal semi-discretization can be written as
\begin{equation}
\label{eq:temp_discre5}
\frac{e^{m+1}-e^{m-1}}{2\delta\tau}=\mathbb{D}\left(\frac{e^{m+1}+e^{m-1}}{2}\right)+\lambda \mathbb{I}(e^{m}),
\end{equation}
\begin{equation*}
\label{eq:temp_discre6}
e^{m+1}(x_{min})=0,\:\:\: e^{m+1}(x_{max})=0.
\end{equation*}
Taking the inner product with $(e^{m+1}+e^{m-1})$ in above Eq.~(\ref{eq:temp_discre5}), we get
\begin{equation}
\label{eq:temp_discre7}
\left(\frac{e^{m+1}-e^{m-1}}{2\delta\tau},(e^{m+1}+e^{m-1})\right)=\mathbb{D}\left(\frac{e^{m+1}+e^{m-1}}{2},(e^{m+1}+e^{m-1})\right)+\lambda \left[\mathbb{I}(e^{m}),(e^{m+1}+e^{m-1})\right],
\end{equation}
which implies
\begin{equation}
\label{eq:temp_discre7}
\begin{split}
\frac{||e^{m+1}||^2-||e^{m-1}||^2}{2\delta\tau}&=-\frac{\sigma^2}{2}\frac{||e_{x}^{m+1}||^2-||e_{x}^{m-1}||^2}{2}+\left(r-\frac{\sigma^2}{2}-\lambda\zeta\right)\left(\frac{e_x^{m+1}+e_{x}^{m-1},e^{m+1}+e^{m-1}}{2}\right)\\
&-(r+\lambda)\frac{||e^{m+1}+e^{m-1}||^2}{2}+\lambda\left[\mathbb{I}(e^{m}),e^{m+1}+e^{m-1}\right].
\end{split}
\end{equation}
After simplification, we have
\begin{equation}
\begin{split}
\label{eq:temp_discre8}
\frac{||e^{m+1}||^2-||e^{m-1}||^2}{2\delta\tau}&=-\frac{\sigma^2}{4}\left(||(e_{x}^{m+1}+e_{x}^{m-1})-\frac{(r-\frac{\sigma^2}{2}-\lambda\zeta)}{\sigma^2}(e^{m+1}+e^{m-1})||^2\right)\\
&+Z||e^{m+1}+e^{m-1}||^2+\lambda\left[\mathbb{I}(e^{m}),e^{m+1}+e^{m-1}\right],
\end{split}
\end{equation}
where $Z=\frac{(r-\frac{\sigma^2}{2}-\lambda \zeta)^2-2(r+\lambda)\sigma^2}{4\sigma^2}$. Now using Eq.~(\ref{eq:integral}) and applying triangle inequality, we get
\begin{equation}
\begin{split}
\label{eq:temp_discre9}
\frac{||e^{m+1}||^2-||e^{m-1}||^2}{2\delta\tau}&\leq 2Z\left(||e^{m+1}+e^{m-1}||^2\right)+\frac{\lambda}{2}\left(C_1||e^{m}||^2+2||e^{m+1}||^2+2||e^{m-1}||^2\right),\\
&\leq \frac{1}{2}\left(\gamma_1||e^{m+1}||^2+\gamma_2||e^{m}||^2+\gamma_3||e^{m-1}||^2\right),\\
&\leq \frac{\gamma}{6}\left(||e^{m+1}||^2+||e^{m}||^2+||e^{m-1}||^2\right),\\
\end{split}
\end{equation}
where $\gamma_1=\gamma_3=4Z+2\lambda$, $\gamma_2=C_1 \lambda$ and $\gamma=6 \left(max \{\gamma_1,\gamma_2,\gamma_3\}\right)$. Without loss of generality suppose that $i$ is an even number and adding up Eq.~(\ref{eq:temp_discre8}) for odd $m$ between $1$ to $i-1$, we obtain
\begin{equation}
\begin{split}
\label{eq:temp_discre10}
\frac{||e^{i}||^2-||e^{0}||^2}{\delta\tau}&\leq \frac{\gamma}{3}\left(\sum_{m=2,\:m:even}^{i}||e^{m}||^2+\sum_{m=1,\:m:odd}^{i-1}||e^{m}||^2+\sum_{m=0,\:m:even}^{i-2}||e^{m}||^2\right),\\
&\leq \frac{\gamma}{3}\left(\sum_{m=2}^{i}||e^{m}||^2+\sum_{m=1}^{i-1}||e^{m}||^2+\sum_{m=0}^{i-2}||e^{m-1}||^2\right),\\
&\leq \gamma\sum_{m=0}^{i}||e^{m}||^2.\\
\end{split}
\end{equation}
Rearranging the terms in the above Eq.~(\ref{eq:temp_discre10}), we get
\begin{equation}
\label{eq:temp_discre11}
||e^{i}||^2\leq \delta \tau\gamma\sum_{m=0}^{i}||e^{m}||^2+||e^{0}||^2.\\
\end{equation}
Now, applying discrete Gronwall's inequality \cite{KadT15}, we get the desired result.
\end{proof}
\subsection{The Fully Discrete Problem}
\label{ssec:three_time}
\par The numerical approximations for the differential operator $\mathbb{D}$ and the integral operator $\mathbb{I}$ are discussed in this section. If $\mathbb{D}_{\delta}$ represents the discrete approximations for the operator $\mathbb{D}$, then
\begin{equation}
\label{eq:full_discre1}
\mathbb{D_{\delta}}{u^{m}_{n}}=\frac{\sigma^2}{2}u^{m}_{xx_{n}}+\left(r-\frac{\sigma^2}{2}-\lambda \zeta\right
)u^{m}_{x_{n}}-(r+\lambda)u^{m}_{n},
\end{equation}
where $u^{m}_{n}=u(x_{n}\tau_{m})$ and $u^{m}_{x_{n}}$, $u^{m}_{xx_{n}}$ are the first and second derivative approximations of $u(x_{n},\tau_{m})$ respectively. Now using Eq.~(\ref{eq:secondc_4}) in above Eq.~(\ref{eq:full_discre1}), we get
\begin{equation}
\label{eq:ddelta}
\mathbb{D_{\delta}}{u^{m}_{n}}=\frac{\sigma^2}{2}\left(2\Delta^2_{x}u^{m}_{n}-\Delta_{x}u_{x_{n}}^{m}\right)+\left(r-\frac{\sigma^2}{2}-\lambda \zeta\right)u^{m}_{x_{n}}-(r+\lambda)u^{m}_{n}.
\end{equation}
In this way, second derivative approximation of unknowns are eliminated from the PIDE using the unknowns itself and their first derivative approximation.
\par Now, the discrete approximation for the integral operator $\mathbb{I}u$ using fourth-order accurate composite Simpson's rule is discussed. Integral operator $\mathbb{I}u(x,\tau)$ given in Equation~(\ref{eq:all_operator}) is divided into two parts namely on $\Omega=(-L,L)$ and $\mathbb{R}\backslash\Omega$. If $\Upsilon(x,\tau,L)$ represents the value of integral operator $\mathbb{I}u$ on $\mathbb{R}\backslash\Omega$, then
\begin{equation}
\Upsilon(x,\tau,L) = \left\{
  \begin{array}{lr}
    Ke^{-r\tau}\Phi\left(-\frac{x+\mu_{J}+L}{\sigma_{J}}\right)-S_{0}e^{x+\frac{\sigma^2_{J}}{2}+\mu_{J}}\Phi\left(-\frac{x+\sigma^2_{J}+\mu_{J}+L}{\sigma_{J}}\right), & \mbox{(European put options)},\\
    K\Phi\left(-\frac{x+\mu_{J}+L}{\sigma_{J}}\right)-S_{0}e^{x+\frac{\sigma^2_{J}}{2}+\mu_{J}}\Phi\left(-\frac{x+\sigma^2_{J}+\mu_{J}+L}{\sigma_{J}}\right), & \mbox{(American put options)},
  \end{array}
\right.
\end{equation}
where $\Phi(y)$ is the cumulative distribution function of standard normal random variable. The value of integral $\mathbb{I}u(x,\tau)$ on the interval $\Omega$ using composite Simpson's rule is given as
\small
\begin{equation}
\begin{split}
\label{eq:simp}
\int_{\Omega}^{}u(y,\tau_{m})g(y-x_{n})dy &= \frac{\delta x}{3}\left(u^{m}_{0}g_{n,0}+
4\sum_{i=1}^{\frac{N}{2}} u^{m}_{2i-1}g_{n,2i-1}+2\sum_{i=1}^{\frac{N}{2}-1} u^{m}_{2i}g_{n,2i}+u^{m}_{N}g_{n,N}\right)\\
&+O(\delta x^4),
\end{split}
\end{equation}
\normalsize
where $g_{n,i}=g(x_{i}-x_{n})$. In order to write the above integral approximation~(\ref{eq:simp}) in matrix-vector multiplication form, we define
\begin{center}
	$B_{g}=\frac{\delta x}{3}\left[
	\begin{array}{ccccc}
	4g(x_1-x_1) & 2g(x_2-x_1) & 4g(x_3-x_1) &  \dots  & 4g(x_{N-1}-x_1) \\
	4g(x_1-x_2) & 2g(x_2-x_2) & 4g(x_3-x_2) &  \dots  & 4g(x_{N-1}-x_2) \\
	4g(x_1-x_3) & 2g(x_2-x_3) & 4g(x_3-x_3) &  \dots  & 4g(x_{N-1}-x_3) \\
	\dots &   \dots & \dots & \dots & \dots \\
	4g(x_1-x_{N-1}) & 2g(x_2-x_{N-1}) & 4g(x_3-x_{N-1}) &  \dots  & 4g(x_{N-1}-x_{N-1}) \\
	\end{array}
	\right],$
\end{center}
\begin{equation*}
u^{m}=\begin{bmatrix}
u_{1}^{m} \\         u_{2}^{m} \\   \vdots \\     u_{N-1}^{m}
\end{bmatrix},\:\:\:
P^{m}=\frac{\delta x}{3}\begin{bmatrix}
u_{0}^{m}g_{n,0} \\         0 \\     \vdots \\    u_{N}^{m}g_{n,N}
\end{bmatrix}.
\end{equation*}
The matrix $B_{g}$ can be transformed into a Toeplitz matrix by transferring the coefficient $[4,2,4,...,2,4]^{T}$ to the vector $u$ as follows
\begin{center}
	$\tilde{B}_{g}=\frac{\delta x}{3}\left[
	\begin{array}{ccccc}
	g(x_1-x_1) & g(x_2-x_1) & g(x_3-x_1) &  \dots  & g(x_{N-1}-x_1) \\
	g(x_1-x_2) & g(x_2-x_2) & g(x_3-x_2) &  \dots  & g(x_{N-1}-x_2) \\
	g(x_1-x_3) & g(x_2-x_3) & g(x_3-x_3) &  \dots  & g(x_{N-1}-x_3) \\
	\dots &   \dots & \dots & \dots & \dots \\
	g(x_1-x_{N-1}) & g(x_2-x_{N-1}) & g(x_3-x_{N-1}) &  \dots  & g(x_{N-1}-x_{N-1}) \\
	\end{array}
	\right],$
\end{center}
and
\[
\tilde{u}^{m}=\left[4u_{1}^{m},2u_{2}^{m},..,2u_{N-2}^{m},4u_{N-1}^{m}\right]^T.
\]
The above matrix-vector product $(\tilde{B}_{g}\tilde{u}^{m})$ is obtained with $O(N\log{}N)$ complexity by embedding the matrix $\hat{B}_{g}$ in a circulant matrix and using FFT for matrix-vector multiplication \cite{Chan96, Chan07}. Therefore, the discrete approximation $(\mathbb{I}_{\delta}u)$ for the integral operator $(\mathbb{I}u)$ is
\begin{equation}
\mathbb{I_{\delta}}u^{m}=\lambda\left(\tilde{B}_{g}\tilde{u}^{m}+P^{m}+\Upsilon(x,\tau,L)\right).
\end{equation}
If $\mathbb{L}_{\delta}$ denote the discrete approximation of operator $\mathbb{L}$ (defined in Equation~(\ref{eq:operator})), then
\begin{equation}
\label{eq:ldelta}
\mathbb{L}_{\delta}u^m_{n}=\mathbb{D_{\delta}}\left(\frac{u^{m+1}_{n}+u^{m-1}_{n}}{2}\right)+\mathbb{I}_{\delta}u^m_{n}.
\end{equation}
We find $U^{m}_{n}$ (the approximate value of $u^{m}_{n}$) which is the solution of following problem
\begin{equation}
\label{eq:pidediscre}
\frac{U^{m+1}_{n}-U^{m-1}_{n}}{2\delta\tau}=\mathbb{D_{\delta}}\left(\frac{U^{m+1}_{n}+U^{m-1}_{n}}{2}\right)+\mathbb{I_{\delta}}U^{m}_{n},\:\:\:\mbox{$1\leq m\leq M-1$,  $1\leq n\leq N-1$},
\end{equation}
Using the values of $D_{\delta}U^{m}_{n}$ from Eq.~(\ref{eq:ddelta}) in Eq.~(\ref{eq:pidediscre}), we obtain
\begin{equation}
\label{eq:corec2}
\begin{split}
\frac{U^{m+1}_{n}-U^{m-1}_{n}}{2\delta\tau}
&=\frac{1}{2}\left[\frac{\sigma^2}{2}\left(2\Delta_{x}^2U^{m+1}_{n}-\Delta_{x}U_{x_{n}}^{m+1}\right)+\left(r-\frac{\sigma^2}{2}-\lambda \zeta\right)U_{x_{n}}^{m+1}-(r+\lambda)U^{m+1}_{n}\right]\\
&+\frac{1}{2}\left[\frac{\sigma^2}{2}\left(2\Delta_{x}^2U^{m-1}_{n}-\Delta_{x}U_{x_{n}}^{m-1}\right)+\left(r-\frac{\sigma^2}{2}-\lambda \zeta\right)U_{x_{n}}^{m-1}-(r+\lambda)U^{m-1}_{n}\right]\\
&+\mathbb{I_{\delta}}U^{m}_{n},\:\:\:\mbox{for $1\leq m\leq M-1$}.
\end{split}
\end{equation}
Re-arranging the terms in above equation, the following fully discrete problem is obtained
\begin{equation}
\label{eq:corec3}
\begin{split}
\left[1-\delta \tau \frac{\sigma^2}{2} 2 \Delta_{x}^2-\delta \tau(r+\lambda)\right]U^{m+1}_{n}&={\delta \tau}\left[-\frac{\sigma^2}{2}\Delta_{x}U_{x_{n}}^{m+1}+\left(r-\frac{\sigma^2}{2}-\lambda \zeta\right)U_{x_{n}}^{m+1}\right]\\
&+\delta\tau\left[\frac{\sigma^2}{2}(2\Delta_{x}^2U^{m-1}_{n}-\Delta_{x}U_{x_{n}}^{m-1})+\left(r-\frac{\sigma^2}{2}-\lambda \zeta\right)U_{x_{n}}^{m-1}\right.\\
&\left. -(r+\lambda)U^{m+1}_{n} \right]+U^{m-1}_{n}+2\delta \tau \mathbb{I_{\delta}}U^{m}_{n},\:\:\:\mbox{for $1 \leq m\leq M-1$}.
\end{split}
\end{equation}
Let us introduce the following notation
\[
\textbf{U}^m=(U_{1}^m,U_{2}^m,...,U_{N-1}^m)^T \: \mbox{and} \:\:\:\textbf{U}_{x}^m=(U_{x_{1}}^m,U_{x_{2}}^m,...,U_{x_{N-1}}^m)^T,
\]
the resulting system of equations corresponding to the difference scheme (\ref{eq:corec3}) can be written as
\begin{equation}
\label{eq:corec4}
A\textbf{U}^{m+1}=F(\textbf{U}^{m}, \textbf{U}^{m-1}, \textbf{U}_{x}^{m-1},\textbf{U}_{x}^{m+1}).
\end{equation}
The presence of $\textbf{U}_{x}^{m+1}$ on the right hand side of the Equation~(\ref{eq:corec4}) bind us to use a predictor corrector method to solve the system of equations. Therefore, correcting to convergence approach \cite{Lambert91} is used and also summarized in the following algorithm.\\
\textbf{Algorithm for Correcting to Convergence Approach}\\
1. Start with $\textbf{U}^{m}$. \\
2. Obtain $\textbf{U}_{x}^{m}$ using Equation~(\ref{eq:firstc_4}). \\
3. Take $\textbf{U}^{m+1}_{old}=\textbf{U}^{m}$, $\textbf{U}_{x_{old}}^{m+1}=\textbf{U}_{x}^{m}$.\\
4. Correct to $\textbf{U}^{m+1}_{new}$ using Equation~(\ref{eq:corec3}).\\
5. If $\|\textbf{U}^{m+1}_{new}-\textbf{U}^{m+1}_{old}\|_{\infty}$ $<$ $\epsilon$, then $\textbf{U}^{m+1}_{new}=\textbf{U}^{m+1}_{old}$.\\
6. Obtain $\textbf{U}_{x_{new}}^{m+1}$ using Equation~(\ref{eq:firstc_4}).\\
7. Take $\textbf{U}^{m+1}_{old}=\textbf{U}^{m+1}_{new}$, $\textbf{U}^{m+1}_{x_{old}}=\textbf{U}^{m+1}_{x_{new}}$ and go to step $4$.\\
The stopping criterion for inner iteration can be set at $\epsilon=10^{-12}$ in above approach. Since the proposed compact scheme~(\ref{eq:corec4}) is three-time levels, two initial values on the zeroth and first time levels are required to start the computation. The initial condition provides the value of $u$ at $\tau =0$ and the value of $u$ at first time level is obtained by IMEX-scheme used in \cite{Cont05}.
\par In above discussed approach, number of iterations  to achieve desired accuracy are not known in advance. Let number of iterations required by above approach be $n_{m}$ at a fixed time level $m$ and $n_{s}$:=$ \displaystyle \max_{1\leq m \leq M}n_{m}$. We know that a tri-diagonal system of equations is solved with $O(N)$ operations and we have also discussed that matrix-vector multiplication is obtained with $O(NlogN)$ complexity. Therefore, maximum computational complexity of the proposed compact finite difference method will be of order $O\left((n_{s}+log N)NM\right)$.
\par Now, the fully discrete problem for American options using compact finite difference method is discussed. Ikonen et. al. \cite{Iko04} proposed the operator splitting technique for American put options under Black-Scholes model and it is extended by Toivanen \cite{Toi08} for jump-diffusion models. For detailed explanation about the operator splitting technique, one can see \cite{Kwonamer11}. A new auxiliary variable $\psi$ is taken such that $\psi=U_{\tau}-\mathbb{L}U$ and LCP~(\ref{eq:american}) is written as follows:
\begin{align}
\label{eq:lcp}
\begin{split}
U_{\tau}-\mathbb{L}U=\psi,\\
\psi \geq 0,\:\:\:\:U \geq f,\:\:\:\: \psi(U-f)=0.
\end{split}
\end{align}
The above equation is discretized using operator splitting technique as follows:
\begin{equation}
\label{eq:hat1}
\frac{U_{n}^{m+1}-U_{n}^{m-1}}{2\delta \tau}-\left[\mathbb{D}_{\delta}\left(\frac{U_{n}^{m+1}+U_{n}^{m-1}}{2}\right)+\mathbb{I}_{\delta}U_{n}^{m}\right]=\Psi_{n}^{m},
\end{equation}
\begin{equation}
\label{eq:hat2}
\frac{U_{n}^{m+1}-U_{n}^{m-1}}{2\delta \tau}-\left[\mathbb{D}_{\delta}\left(\frac{U_{n}^{m+1}+U_{n}^{m-1}}{2}\right)+\mathbb{I}_{\delta}U_{n}^{m}\right]=\Psi_{n}^{m+1}.
\end{equation}
Now, a pair $(\Psi_{n}^{m+1}, U_{n}^{m+1})$ is to be obtained satisfying the Eqs.~(\ref{eq:hat1}) and~(\ref{eq:hat2}) and the constraints
\begin{equation}
U_{n}^{m+1} \geq f(x_{n}), \:\:\:\:\:\: \Psi_{n}^{m+1} \geq 0, \:\:\:\:\:\: \Psi_{n}^{m+1}\left(U_{n}^{m+1}-f(x_{n})\right)=0.
\end{equation}
An algorithm to solve the above equations is presented in Algorithm~\ref{alg:algorithm}. The system of linear equations obtained from Algorithm~\ref{alg:algorithm} are solved using the correcting to convergence approach which has already been discussed.
\begin{algorithm}[h!]
	\begin{tabbing}
		\hspace{12pt} \= \hspace{14pt} \= \hspace{14pt} \= \kill
		\> {\bf for} $m=0$ \\
		\>\>  {\bf for} $n=1,2,...,N-1$\\
		\> \>\>{\bf $\frac{U_{n}^{m+1}-U_{n}^{m}}{\delta \tau}=\mathbb{D}_{\delta}U_{n}^{m+1}+\mathbb{I}_{\delta}U_{n}^{m}+\Psi_{n}^{m}$} \\
		\>\>{\bf end}\\
		\>\>{\bf Solve for $n=1,2,...,N-1$}\\
		\>\>\> {\bf $U_{n}^{m+1}=max\left(f(x_{n}),U^{m+1}_{n}-\delta \tau \Psi^{m}_{n}\right)$}\\
		\>\>\> {\bf $\Psi^{m+1}_{n}=\frac{U^{m+1}_{n}-U^{m+1}_{n}}{\delta \tau}+\Psi^{m}_{n}$}\\
		\>{\bf end}\\
		
		\> {\bf for} $m \geq 1$ \\
		\>\>  {\bf for} $n=1,2,...,N-1$\\
		\>\>\> {\bf $\frac{U_{n}^{m+1}-U_{n}^{m-1}}{2\delta\tau}=\mathbb{D}_{\delta}\left(\frac{U_{n}^{m+1}
				+U_{n}^{m-1}}{2}\right)+\mathbb{I}_{\delta}U_{n}^{m}+\Psi_{n}^{m}$}\\
		\>\>{\bf end}\\
		\>\>{\bf Solve for $n=1,2,...,N-1$}\\
		\>\>\> {\bf $U_{n}^{m+1}=max\left(f(x_{n}),U^{m+1}_{n}-2\delta \tau \Psi^{m}_{n}\right)$}\\
		\>\>\>{\bf $\Psi^{m+1}_{n}=\frac{U^{m+1}_{n}-U^{m+1}_{n}}{2\delta \tau}+\Psi^{m}_{n}$}\\
		\>{\bf end}\\
	\end{tabbing}
	\caption{Algorithm for evaluating American options.\label{alg:algorithm}}
\end{algorithm}
\section{Consistency and stability analysis}
\label{sec:analysis}
\subsection{Consistency}
\label{ssec:consistency}
\par The consistency of the proposed compact finite difference method~(\ref{eq:corec3}) is proved in the following theorem.
\begin{theorem}
For sufficiently small $\delta x$ and $\delta \tau$, we have
\begin{equation}
\label{eq:temp_cons}
\frac{\partial u}{\partial \tau}(x_n,\tau_m)-\mathbb{L}u(x_n,\tau_m)-\left(\frac{u(x_n,\tau_{m+1})-u(x_n,\tau_{m-1})}{2 \delta \tau}-\mathbb{L}_{\delta}u(x_n,\tau_m)\right)=O(\delta \tau^2+\delta x^4)\:\:\:\:\: \mbox{for $m\geq 1$},
\end{equation}
where $\mathbb{L}$ and $\mathbb{L}_{\delta}$ are given in Eqs.~(\ref{eq:operator}) and~(\ref{eq:ldelta}) respectively and $(x_n,\tau_m) \in (-L,L)\times(0,T]$.
\end{theorem}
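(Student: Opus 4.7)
The plan is to substitute the exact solution $u$ of (\ref{eq:pidefinal}) into the discrete operator $\mathbb{L}_{\delta}$, use the PIDE to cancel $u_{\tau}-\mathbb{L}u$, and split the resulting local truncation error as $\mathcal{E}=T_{\tau}+T_{\mathbb{D}}+T_{\mathbb{I}}$, where $T_{\tau}$ arises from the leap-frog time difference, $T_{\mathbb{D}}$ from $\mathbb{D}_{\delta}$ applied to the Crank--Nicolson time average, and $T_{\mathbb{I}}$ from the Simpson-plus-tail-correction approximation of the jump integral. I would then bound the three pieces separately, aiming for $O(\delta\tau^{2})$, $O(\delta\tau^{2}+\delta x^{4})$, and $O(\delta x^{4})$ respectively, and add them.

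For $T_{\tau}$, Taylor expansion of $u(x_{n},\tau_{m\pm 1})$ about $\tau_{m}$ gives $\tfrac{1}{2\delta\tau}\bigl(u(x_{n},\tau_{m+1})-u(x_{n},\tau_{m-1})\bigr)=u_{\tau}(x_{n},\tau_{m})+O(\delta\tau^{2})$, and likewise $\tfrac{1}{2}\bigl(u(x_{n},\tau_{m+1})+u(x_{n},\tau_{m-1})\bigr)=u(x_{n},\tau_{m})+\tfrac{1}{2}\delta\tau^{2}u_{\tau\tau}(x_{n},\tau_{m})+O(\delta\tau^{4})$. Since $\mathbb{D}_{\delta}$ is a bounded linear combination of $\Delta_{xx}$, $\Delta_{x}$ and the identity, applying it to this $O(\delta\tau^{2})$ deviation produces a further $O(\delta\tau^{2})$ contribution on top of $\mathbb{D}_{\delta}u(x_{n},\tau_{m})$, provided $u$ has the requisite smoothness in both variables.

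For the spatial piece $T_{\mathbb{D}}$, the crucial step is to confirm that the identity (\ref{eq:secondc_4}) is genuinely fourth order when $u_{x_{n}}$ is the compact first-derivative approximation rather than the exact derivative. The compact scheme (\ref{eq:firstc_4}) has an error of the form $u_{x_{n}}-u_{x}(x_{n})=c\,\delta x^{4}u^{(5)}(x_{n})+O(\delta x^{6})$; because this leading term is a smooth function of the grid index, applying $\Delta_{x}$ to the error kills the nominal $\delta x^{-1}$ amplification and leaves an $O(\delta x^{4})$ remainder, so $\Delta_{x}u_{x_{n}}=u_{xx}(x_{n})+\tfrac{\delta x^{2}}{6}u_{xxxx}(x_{n})+O(\delta x^{4})$. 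Combined with $2\Delta_{xx}u(x_{n})=2u_{xx}(x_{n})+\tfrac{\delta x^{2}}{6}u_{xxxx}(x_{n})+O(\delta x^{4})$, the $\delta x^{2}$ terms cancel and we obtain $2\Delta_{xx}u-\Delta_{x}u_{x}=u_{xx}+O(\delta x^{4})$. Inserting this into $\mathbb{D}_{\delta}$ and using the $O(\delta x^{4})$ accuracy of $u_{x_{n}}$ for the advection-like term yields $\mathbb{D}_{\delta}u(x_{n},\tau_{m})=\mathbb{D}u(x_{n},\tau_{m})+O(\delta x^{4})$; combining with the time-averaging estimate gives $T_{\mathbb{D}}=O(\delta\tau^{2}+\delta x^{4})$.

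The integral part $T_{\mathbb{I}}$ is handled directly: the closed-form correction $\Upsilon(x_{n},\tau_{m},L)$ accounts exactly (up to the exponentially small localization error of Cont--Voltchkova) for the contribution of $\mathbb{R}\setminus\Omega$, while composite Simpson's rule on $\Omega$ integrates the $C^{4}$ density $u(\cdot,\tau_{m})g(\cdot-x_{n})$ with error $O(\delta x^{4})$. Adding the three estimates gives $\mathcal{E}=O(\delta\tau^{2}+\delta x^{4})$, as claimed. The main obstacle is the smoothness-of-the-error argument needed in the $T_{\mathbb{D}}$ step: a mere pointwise bound $|u_{x_{n}}-u_{x}(x_{n})|=O(\delta x^{4})$ would only deliver $\Delta_{x}(u_{x_{n}}-u_{x})=O(\delta x^{3})$, which would spoil the fourth-order cancellation in (\ref{eq:secondc_4}); what is really required is the quantitative leading-order expansion of the truncation error of (\ref{eq:firstc_4}) so that the error varies smoothly in $n$, and this is the delicate place in the proof.
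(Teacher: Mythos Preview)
Your proposal is correct and follows essentially the same three-way decomposition (leap-frog time difference, differential operator on the Crank--Nicolson average, Simpson quadrature for the jump integral) and the same Taylor-expansion bookkeeping as the paper's own proof. The one notable difference is that the paper simply invokes ``the compact finite difference approximations are fourth-order accurate'' as a black box when asserting $\bigl|\tfrac{\partial^{2}u}{\partial x^{2}}(x_{n},\tau_{m\pm 1})-u_{xx_{n}}^{m\pm 1}\bigr|=O(\delta x^{4})$, whereas you explicitly verify the cancellation mechanism in (\ref{eq:secondc_4}) and correctly flag that it is the smooth leading-order form $c\,\delta x^{4}u^{(5)}(x_{n})$ of the error in (\ref{eq:firstc_4}) that prevents the naive $O(\delta x^{3})$ loss when $\Delta_{x}$ is applied; in this respect your argument is actually more complete than the paper's.
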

\begin{proof} The second-order accurate finite difference approximation for time derivative $\left(\frac{\partial u}{\partial \tau}\right)$ using Taylor series expansion is obtained as follows
\begin{equation}
\label{eq:timeerr}
\left|\frac{\partial u}{\partial \tau}(x_n,\tau_m)-\frac{u(x_n,\tau_{m+1})-u(x_n,\tau_{m-1})}{2 \delta \tau}\right|\\
 \leq \frac{\delta \tau^2}{6} \sup_{\tau \in [\tau_{m-1}, \tau_{m+1}]}\left|\frac{\partial^3 u}{\partial \tau^3}(x_n,\tau)\right|.
 \end{equation}
Further, Taylor series expansion for second derivative provides
\begin{equation*}
\frac{1}{2}\left[\frac{\partial^2 u}{\partial x^2}(x_{n},\tau_{m+1})+\frac{\partial^2 u}{\partial x^2}(x_{n},\tau_{m-1})\right]=\frac{\partial^2 u}{\partial x^2}(x_{n},\tau_{m})+\frac{\delta \tau^2}{2}\frac{\partial^4 u}{\partial x^2 \partial \tau^2}(x_{n},\tau_{m})+O(\delta \tau^3).
\end{equation*}
From above equation, we have
 \[
 \left \lvert \frac{\partial^2 u}{\partial x^2}(x_n,\tau_m)-\frac{1}{2}\left[\frac{\partial^2 u}{\partial x^2}(x_n,\tau_{m+1})+\frac{\partial^2 u}{\partial x^2}(x_n,\tau_{m-1})\right]\right\rvert \leq \frac{\delta \tau^2}{2} \sup_{\tau \in [\tau_{m-1}, \tau_{m+1}]}\left|\frac{\partial^4 u}{\partial x^2 \partial \tau^2 }(x_n,\tau)\right|.
\]
Since the compact finite difference approximations (discussed in Sec.~\ref{sec:compact}) are fourth-order accurate, we can write
\[
\left\lvert\frac{\partial^2 u}{\partial x^2}(x_n,\tau_{m+1})-u_{xx_{n}}^{m+1}\right\rvert=O(\delta x^4),\:\:\:\:\:\left \lvert\frac{\partial^2 u}{\partial x^2}(x_n,\tau_{m-1})-u_{xx_{n}}^{m-1}\right\rvert=O(\delta x^4).
\]
Therefore
\begin{equation*}
\begin{aligned}
\frac{\partial^2 u}{\partial x^2}(x_n,\tau_m)-\frac{1}{2}\left[u_{xx_{n}}^{m+1}+u_{xx_{n}}^{m-1}\right]
& = \frac{\partial^2 u}{\partial x^2}(x_n,\tau_m)-\frac{1}{2}\left[u_{xx_{n}}^{m+1}+u_{xx_{n}}^{m-1}\right]-\frac{1}{2}\frac{\partial^2 u}{\partial x^2}(x_n,\tau_{m+1})\\
& + \frac{1}{2}\frac{\partial^2 u}{\partial x^2}(x_n,\tau_{m+1})-\frac{1}{2}\frac{\partial^2 u}{\partial x^2}(x_n,\tau_{m-1})+\frac{1}{2}\frac{\partial^2 u}{\partial x^2}(x_n,\tau_{m-1}),\\
& = \frac{\partial^2 u}{\partial x^2}(x_n,\tau_m)-\frac{1}{2}\left[\frac{\partial^2 u}{\partial x^2}(x_n,\tau_{m+1})+\frac{\partial^2 u}{\partial x^2}(x_n,\tau_{m-1})\right],\\
& + \frac{1}{2}\left[\frac{\partial^2 u}{\partial x^2}(x_n,\tau_{m+1})-u_{xx_{n}}^{m+1}\right]+\frac{1}{2}\left[\frac{\partial^2 u}{\partial x^2}(x_n,\tau_{m-1})-u_{xx_{n}}^{m-1}\right]\\
& = O(\delta \tau^2+\delta x^4).\\
\end{aligned}
\end{equation*}
Similarly, for first derivative approximation we have
\[
 \left\lvert\frac{\partial u}{\partial x}(x_n,\tau_m)-\frac{1}{2}\left[\frac{\partial u}{\partial x}(x_n,\tau_{m+1})+\frac{\partial u}{\partial x}(x_n,\tau_{m-1})\right]\right\rvert \leq \frac{\delta \tau^2}{2} \sup_{\tau \in [\tau_{m-1}, \tau_{m+1}]}\left|\frac{\partial^3 u}{\partial x \partial \tau^2 }(x_n,\tau)\right|,
\]
\[
\left\lvert\frac{\partial u}{\partial x}(x_n,\tau_{m+1})-u_{x_{n}}^{m+1}\right\rvert=O(\delta x^4),\:\:\:\:\:\left\lvert\frac{\partial u}{\partial x}(x_n,\tau_{m-1})-u_{x_{n}}^{m-1}\right\rvert=O(\delta x^4).
\]
Thus, we get
\begin{equation*}
\begin{split}
\frac{\partial u}{\partial x}(x_n,\tau_m)-\frac{1}{2}\left[u_{x_{n}}^{m+1}+u_{x_{n}}^{m-1}\right]
& = \frac{\partial u}{\partial x}(x_n,\tau_m)-\frac{1}{2}\left[u_{x_{n}}^{m+1}+u_{x_{n}}^{m-1}\right]-\frac{1}{2}\frac{\partial u}{\partial x}(x_n,\tau_{m+1})\\
& + \frac{1}{2}\frac{\partial u}{\partial x}(x_n,\tau_{m+1})-\frac{1}{2}\frac{\partial u}{\partial x}(x_n,\tau_{m-1})+\frac{1}{2}\frac{\partial u}{\partial x}(x_n,\tau_{m-1})\\
& = \frac{\partial u}{\partial x}(x_n,\tau_m)-\frac{1}{2}\left[\frac{\partial u}{\partial x}(x_n,\tau_{m+1})+\frac{\partial u}{\partial x}(x_n,\tau_{m-1})\right]\\
& + \frac{1}{2}\left[\frac{\partial u}{\partial x}(x_n,\tau_{m+1})-u_{x_{n}}^{m+1}\right]+\frac{1}{2}\left[\frac{\partial u}{\partial x}(x_n,\tau_{m-1})-u_{x_{n}}^{m-1}\right]\\
& = O(\delta \tau^2+\delta x^4).
\end{split}
\end{equation*}
Hence, differential operator $\mathbb{D}$ in the PIDE~(\ref{eq:pidediscre}) can be approximated by discrete operator $\mathbb{D}_{\delta}$ with error at each mesh point $(x_n,\tau_m)$
\begin{equation}
\label{eq:differror}
\mathbb{D} u(x_n,\tau_m)-\mathbb{D}_{\delta}\left(\frac{u(x_n,\tau_{m+1})+u(x_n,\tau_{m-1})}{2}\right)=O(\delta \tau^2+\delta x^4).
\end{equation}
The integral operator of PIDE (\ref{eq:pidediscre}) is approximated by fourth-order accurate composite Simpson's rule (as discussed in Sec.~\ref{sec:disc_prob}). From Eq.~(\ref{eq:simp}), we have
\begin{equation}
\label{eq:interror}
\mathbb{I}u(x_n,\tau_m)-\mathbb{I}_{\delta}u(x_n,\tau_m)=O(\delta x^4).
\end{equation}
From Eqs.~(\ref{eq:timeerr}), ~(\ref{eq:differror}) and~(\ref{eq:interror}), result follows.
\end{proof}
\subsection{Stability}
\label{ssec:stability}
The stability of proposed compact finite difference method is proved using von Neumann stability analysis. Consider a single node
\begin{equation}
\label{eq:u_fou}
U_{n}^{m}=p^{m}e^{In\theta},
\end{equation}
where $I=\sqrt{-1}$, $p^{m}$ is the $m^{th}$ power of amplitude at time levels $\tau_{m}$. We consider the integration term given in Eq.~(\ref{eq:all_operator}) in an equivalent form as follows.
\[
\mathbb{I}u(x,\tau) = \lambda \int_{-L}^{L}u(y+x,\tau)g(y)dy.
\]
Fourth-order accurate composite Simpson's rule for above equation is then given by
\begin{equation*}
\begin{split}
\mathbb{I}_{\delta}u&=\delta x \sum_{k=0}^{N}w_{k}U_{k+n}^{m}g_{k},\\
 &= \delta x \sum_{k=0}^{N}w_{k}p^{m}e^{I\theta(k+n)}g_{k},\\
&\equiv p^{m}e^{I\theta n}G_{k},
\end{split}
\end{equation*}
where
\begin{equation}
\label{eq:g_k}
G_{k}=\delta x\sum_{k=0}^{N}w_{k}e^{I\theta k}g_{k} \:\:\: \mbox{and} \:\:\: g_{k}=g(x_{k}).
\end{equation}
The fourth-order accuracy of the numerical quadrature $G_{k}$ is proved in the following lemma.
\begin{lemma}
\label{lemma:1}
The numerical quadrature $G_{k}$ given in Eq.~(\ref{eq:g_k}) satisfies the following
\[
\lvert G_{k}\rvert \leq 1+ c \delta x^4,
\]
where $c$ is a constant.
\end{lemma}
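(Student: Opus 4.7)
The plan is to bound $|G_k|$ by passing the modulus inside the sum, exploiting both $|e^{I\theta k}|=1$ and the fact that $g$ is a probability density (so $g \geq 0$ and $\int_{\mathbb{R}} g = 1$), and then applying the standard error estimate for composite Simpson's rule applied to the \emph{unweighted} density $g$.

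First, I would apply the triangle inequality directly to the definition in Eq.~(\ref{eq:g_k}). Since $|e^{I\theta k}|=1$ and Simpson's weights $w_k \in \{1/3, 2/3, 4/3\}$ are non-negative, and since $g_k = g(x_k) \geq 0$ (as $g$ is a probability density for the jump sizes),
\[
|G_k| \;\leq\; \delta x \sum_{k=0}^{N} w_k\, g_k.
\]
The key observation is that this last sum is precisely the composite Simpson's rule quadrature of the form used in Eq.~(\ref{eq:simp}) applied to the real, smooth, non-negative function $g$ over the interval $[-L,L]$. Crucially, the oscillatory factor $e^{I\theta k}$ has been eliminated, so we no longer face an integrand whose fourth derivative blows up as $\delta x \to 0$; the bound is now reduced to controlling Simpson's rule on a fixed smooth density.

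Second, I would invoke the standard composite Simpson's error estimate. Since $g$ (the Merton-type Gaussian density) is $C^4$ with $\max_{x\in[-L,L]}|g^{(4)}(x)|$ bounded independently of $\delta x$, we have
\[
\delta x \sum_{k=0}^{N} w_k\, g_k \;=\; \int_{-L}^{L} g(x)\,dx \;+\; R(\delta x), \qquad |R(\delta x)| \leq \frac{(2L)\,\delta x^{4}}{180}\,\max_{x\in[-L,L]}\bigl|g^{(4)}(x)\bigr|.
\]
Finally, using that $g$ is a probability density,
\[
\int_{-L}^{L} g(x)\,dx \;\leq\; \int_{\mathbb{R}} g(x)\,dx \;=\; 1,
\]
so combining gives $|G_k| \leq 1 + c\,\delta x^{4}$ with $c = \frac{L}{90}\,\max_{[-L,L]}|g^{(4)}|$, which proves the claim.

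The only subtlety, and the thing one has to be careful not to botch, is the temptation to interpret $G_k$ directly as a Simpson approximation of $\int e^{I\alpha x} g(x)\,dx$ with $\alpha = \theta/\delta x$: such a reading would produce an integrand whose fourth derivative scales like $\delta x^{-4}$, and the naive Simpson estimate would fail completely. The trick is that we never need the oscillatory integral at all; the triangle-inequality step collapses the oscillation before quadrature error is invoked, which is why the $O(\delta x^4)$ constant can be taken uniform in the wavenumber $\theta$ and hence usable in the subsequent von Neumann stability analysis.
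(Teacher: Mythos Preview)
Your proof is correct and follows essentially the same approach as the paper: bound $|G_k|$ via the triangle inequality to remove the oscillatory factor, then control the resulting Simpson sum for $g$ by combining the fourth-order quadrature error with $\int_{\mathbb{R}} g = 1$. Your write-up is in fact more detailed than the paper's, giving the explicit Simpson constant and a useful remark on why one must not treat $G_k$ as a Simpson approximation of the oscillatory integral directly.
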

\begin{proof}
Using the property of density function $g(x)$, we have
 \begin{equation}
\int_{\Omega}g(x)dx \leq \int_{-\infty}^{\infty}g(x)dx=1.
\end{equation}
Applying composite Simpson's rule to the above relation, we have
\[
\delta x \sum_{k=0}^{N}w_{k}g_{k} \leq 1+ c \delta x^4.
\]
From Eq~(\ref{eq:g_k}), we get the desired result as follows
\begin{equation}
\begin{split}
\lvert G_{k}\rvert &= \lvert \delta x \sum_{k=0}^{N}w_{k}e^{i \theta k}g_{k} \rvert,\\
&  \leq 1+ c \delta x^4.
\end{split}
\end{equation}
\end{proof}
For sake of simplicity, we denote $\frac{\sigma^2}{2}=a$ and $\left(r-\frac{\sigma^2}{2}-\lambda \zeta\right)=b$ in the rest of the section. Therefore, the fully discrete problem~(\ref{eq:corec3}) can be written as follows
\begin{equation}
\begin{split}
(1-2 a \delta \tau \Delta_{x}^2+\delta \tau(r+\lambda))U^{m+1}_{n}&=(1+2 a \delta \tau 2 \Delta_{x}^2-\delta \tau(r+\lambda))U^{m-1}_{n}+2 \delta \tau\left[\frac{b}{2}-\frac{a}{2}\Delta_{x}\right]U_{x_{n}}^{m+1}\\
&+2 \delta \tau\left[\frac{b}{2}-\frac{a}{2}\Delta_{x}\right]U_{x_{n}}^{m-1}+2 \delta\tau\lambda G_{k}U^{m}_{n}.
\label{eq:stab}
\end{split}
\end{equation}
The following relations are obtained from Eqs.~(\ref{eq:integer1}) and~(\ref{eq:integer2}) discussed in Sec.~\ref{sec:compact}.
\begin{equation}
\label{eq:stab1}
\Delta_{x}U_{n}^{m}=I \frac{sin(\theta)}{\delta x}U^{m}_{n},
\end{equation}
\begin{equation}
\label{eq:stab2}
\Delta_{x}^{2}U_{n}^{m}= \frac{2cos(\theta)-2}{\delta x^2}U^{m}_{n},
\end{equation}
\begin{equation}
\label{eq:stab3}
U_{x_{n}}^{m}= I \frac{3sin(\theta)}{\delta x(2+cos(\theta))}U^{m}_{n}.
\end{equation}
Using relation (\ref{eq:stab1}), (\ref{eq:stab2}) and (\ref{eq:stab3}) in the difference scheme (\ref{eq:stab}), we get
\begin{equation}
\begin{split}
\label{eq:stab4}
\left[1-4a\delta \tau \left(\frac{cos(\theta)-1}{\delta x^2}\right)+\delta \tau(r+\lambda)\right]U^{m+1}_{n}
&=\left[1+4a\delta \tau \left(\frac{cos(\theta)-1}{\delta x^2}\right)-\delta \tau(r+\lambda)\right]U^{m-1}_{n}\\
&+\delta \tau\left[\left(a\frac{sin(\theta)}{\delta x}+Ib\right)\frac{3sin(\theta)}{\delta x(2+cos(\theta))}\right]U_{n}^{m+1}\\
&+\delta \tau\left[\left(a\frac{sin(\theta)}{\delta x}+Ib\right)\frac{3sin(\theta)}{\delta x(2+cos(\theta))}\right]U_{n}^{m-1}\\
&+2\delta \tau \lambda G_{k} U_{n}^{m},
\end{split}
\end{equation}
which implies
\small
\begin{equation}
\begin{split}
\label{eq:stab5}
\left[1-\delta \tau a\frac{cos^{2}(\theta)+4cos(\theta)-5}{\delta x^2(2+cos(\theta))}+\delta \tau(r+\lambda)-I \delta \tau b\frac{3sin(\theta)}{\delta x(2+cos(\theta))}\right]U^{m+1}_{n}
&=\left[1+\delta \tau a\frac{cos^{2}(\theta)+4cos(\theta)-5}{\delta x^2(2+cos(\theta))}\right.\\
&\left.-\delta \tau(r+\lambda)+I\delta \tau b\frac{3sin(\theta)}{\delta x(2+cos(\theta))}\right]U^{m-1}_{n}\\
&+2\delta \tau \lambda G_{k} U_{n}^{m}.
\end{split}
\end{equation}
\normalsize
Now using Eq.~(\ref{eq:u_fou}) in above and divide the above equation by $p^{m-1}e^{In\theta}$, we get the amplification polynomial
\begin{equation}
\label{eq:amplif}
\Theta(\delta x, \delta \tau, \theta)=\gamma_{0}p^2-2\gamma_{1}p-\gamma_{2},
\end{equation}
where
\begin{equation}
\label{eq:gammas}
\begin{split}
\gamma_{0}
& = \left[1-\delta \tau
\left(a\frac{cos^{2}(\theta)+4cos(\theta)-5}{\delta x^2(2+cos(\theta))}-(r+\lambda)+Ib\frac{3sin(\theta)}{\delta x(2+cos(\theta))}\right)\right],\\
\gamma_{1}
& = \lambda \delta \tau G_{k},\\
\gamma_{2}
& = \left[1+\delta \tau
\left(a\frac{cos^{2}(\theta)+4cos(\theta)-5}{\delta x^2(2+cos(\theta))}-(r+\lambda)+Ib\frac{3sin(\theta)}{\delta x(2+cos(\theta))}\right)\right].\\
\end{split}
\end{equation}
The following lemma is used from \cite{Jcstrik04} in order to prove the stability of the proposed compact finite difference method.
\begin{lemma}
\label{lemma:2}
A finite difference scheme is stable if and only if all the roots, $p_{u}$, of the amplification polynomial $\Theta$ satisfies the following condition:
\begin{itemize}
\item There is a constant $C$ such that $|p_{u}| \leq 1+C\delta \tau$.
\item There are positive constants $a_{0}$ and $a_{1}$ such that if $a_{0} <|p_{u}|\leq 1+C\delta \tau$ then $|p_{u}|$ is simple root and for any other root $p_{v}$, following relation holds
    \[
    \left|p_{v}-p_{u}\right|\geq a_{1},
    \]
    as $\delta x$, $\delta \tau$$\rightarrow0$.
\end{itemize}
\end{lemma}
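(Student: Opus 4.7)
The plan is to recast the three-time-level scheme~(\ref{eq:stab}) as an equivalent one-step scheme acting on the augmented state $V^m := (U^m, U^{m-1})^T$, so that the difference equation becomes $V^{m+1} = G\,V^m$ for a $2\times 2$ amplification matrix $G = G(\delta x,\delta\tau,\theta)$ whose characteristic polynomial is precisely the amplification polynomial $\Theta$ of Eq.~(\ref{eq:amplif}). The two eigenvalues of $G$ are then the roots $p_u$ of $\Theta$. By the Lax--Richtmyer definition of stability, the scheme is stable iff the family $\{G^n\}$ is uniformly bounded for $n\delta\tau \leq T$ as $\delta x,\delta\tau\to 0$. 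With this reformulation, the lemma becomes a matrix-power-boundedness characterization, and the proof splits naturally into necessity and sufficiency of the two listed conditions.

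For necessity, assume stability and fix any root $p_u$. Since $p_u^n$ is an eigenvalue of $G^n$ and $\|G^n\|\leq K$ uniformly in the admissible range, one has $|p_u|^n\leq K$ for every $n\leq T/\delta\tau$; taking logarithms gives $|p_u|\leq K^{\delta\tau/T}\leq 1 + C\delta\tau$, which is the first condition. For the second condition I argue by contradiction. Suppose there are two roots $p_u, p_v$ lying in the annulus $a_0 < |p|\leq 1+C\delta\tau$ that can be chosen arbitrarily close as $\delta x,\delta\tau\to 0$. Then the eigenvector matrix of $G$ becomes increasingly ill-conditioned and the expansion of $G^n$ picks up a factor of order $n\,|p_u|^{n-1}/|p_v-p_u|$, which blows up as $\delta x,\delta\tau\to 0$ with $n\delta\tau\leq T$ fixed, contradicting the uniform bound $\|G^n\|\leq K$. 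Hence the separation $|p_v-p_u|\geq a_1$ must hold and $p_u$ must be simple.

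For sufficiency, suppose both conditions hold. Roots with $|p_u|\leq a_0$ contribute at most $a_0^n$ to any spectral expansion of $G^n$ and are harmless. For roots with $a_0 < |p_u|\leq 1+C\delta\tau$, simplicity together with the uniform separation $|p_v-p_u|\geq a_1$ permit a diagonalization $G = S\,\mathrm{diag}(p_u, p_v)\,S^{-1}$ whose condition number $\kappa(S)$ is bounded uniformly in $\delta x,\delta\tau,\theta$; for the $2\times 2$ companion-form $G$ this is transparent because the eigenvector matrix is essentially Vandermonde in $(p_u,p_v)$ and its determinant is a constant multiple of $p_v - p_u$. Consequently $\|G^n\|\leq \kappa(S)\,(\max|p_u|)^n\leq \kappa(S)\,(1+C\delta\tau)^n\leq \kappa(S)\,e^{CT}$, giving the required uniform bound on powers, and stability follows.

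The main obstacle is the quantitative control of $\kappa(S)$ uniformly in all the parameters. The separation hypothesis $|p_v - p_u|\geq a_1$ is exactly what is needed, but converting it into a bound on $\|S\|\,\|S^{-1}\|$ requires care; the cleanest route is through the Bauer--Fike perturbation theorem or via explicit estimation of the spectral projectors of $G$. Once this uniform condition-number bound is established, the rest of the argument is routine bookkeeping of geometric factors $(1+C\delta\tau)^n$ versus $e^{CT}$, paralleling Strikwerda's exposition.
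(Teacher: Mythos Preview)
The paper does not actually prove this lemma: its entire proof reads ``For proof, see \cite{Jcstrik04}'' (Strikwerda's textbook). So there is no argument in the paper to compare against; you have supplied one where the authors simply deferred to a reference.

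That said, your sketch is precisely the standard route taken in Strikwerda: rewrite the multi-step recursion as a one-step iteration $V^{m+1}=G\,V^m$ with $G$ the companion matrix of $\Theta$, identify the roots $p_u$ as eigenvalues of $G$, and translate Lax--Richtmyer stability into uniform boundedness of $\{G^n\}$. Your necessity argument for the first bullet is clean; for the second bullet the contradiction via near-defective $G$ and the $n|p_u|^{n-1}/|p_v-p_u|$ growth term is the right mechanism. For sufficiency, your observation that the eigenvector matrix of the companion form is Vandermonde in the roots, so that $\kappa(S)$ is controlled by the separation $|p_v-p_u|\ge a_1$, is exactly the point.

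Two small caveats. First, your write-up is explicitly $2\times 2$, matching the three-level scheme of this paper, whereas the lemma as stated covers amplification polynomials of arbitrary degree; the extension is routine (a $d\times d$ companion matrix with Vandermonde eigenvectors, conditioning governed by the minimum pairwise root gap among the ``outer'' roots), but you should either say so or note that only the quadratic case is needed here. Second, in the necessity half you should make explicit that the threshold $a_0$ exists because roots bounded strictly inside the unit disk can coalesce harmlessly---$n|p|^{n-1}$ stays bounded when $|p|\le a_0<1$---so the separation requirement is only forced near $|p|=1$; your sufficiency half already uses this, but the necessity paragraph glosses over why $a_0$ can be taken strictly less than $1$.
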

\begin{proof}
For proof, see \cite{Jcstrik04}.
\end{proof}
We prove the above Lemma.~\ref{lemma:2} for the proposed compact finite difference method as follows.
\begin{theorem}
The fully discrete problem~(\ref{eq:corec3}) is stable in the sense of Von-Neumann for $\delta \tau$ $\leq$ $1/(2\lambda)$.
\end{theorem}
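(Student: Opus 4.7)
The plan is to verify the two conditions of Lemma~\ref{lemma:2} for the amplification polynomial $\Theta(\delta x,\delta\tau,\theta)=\gamma_{0}p^{2}-2\gamma_{1}p-\gamma_{2}$ with coefficients given in (\ref{eq:gammas}). First I would rewrite these coefficients in a more revealing structural form: set
\[
A=a\,\frac{\cos^{2}\theta+4\cos\theta-5}{\delta x^{2}(2+\cos\theta)}-(r+\lambda)+Ib\,\frac{3\sin\theta}{\delta x(2+\cos\theta)},
\]
so that $\gamma_{0}=1-\delta\tau A$, $\gamma_{2}=1+\delta\tau A$, and $\gamma_{1}=\lambda\delta\tau G_{k}$. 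Observe the factorization $\cos^{2}\theta+4\cos\theta-5=(\cos\theta-1)(\cos\theta+5)\leq 0$, which forces $\mathrm{Re}(A)\leq -(r+\lambda)\leq 0$. This sign information is what I will lean on for all the subsequent bounds.

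Next I would use the quadratic formula to write the two roots as $p_{\pm}=(\gamma_{1}\pm\sqrt{\gamma_{1}^{2}+\gamma_{0}\gamma_{2}})/\gamma_{0}$, and exploit the Vieta relations $p_{+}p_{-}=-\gamma_{2}/\gamma_{0}$ and $p_{+}+p_{-}=2\gamma_{1}/\gamma_{0}$. Because $\mathrm{Re}(A)\leq 0$, a direct computation gives $|\gamma_{2}|^{2}-|\gamma_{0}|^{2}=4\delta\tau\,\mathrm{Re}(A)\leq 0$, so $|p_{+}p_{-}|=|\gamma_{2}/\gamma_{0}|\leq 1$, and moreover $|\gamma_{0}|\geq 1$. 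Combining with Lemma~\ref{lemma:1} (which bounds $|G_{k}|\leq 1+c\delta x^{4}$), the sum satisfies $|p_{+}+p_{-}|\leq 2\lambda\delta\tau(1+c\delta x^{4})$. Assuming WLOG $|p_{+}|\geq|p_{-}|$, from the product bound we get $|p_{-}|\leq 1$ whenever $|p_{+}|\geq 1$, and then $|p_{+}|\leq|p_{+}+p_{-}|+|p_{-}|\leq 1+2\lambda\delta\tau(1+c\delta x^{4})$. This establishes the first condition of Lemma~\ref{lemma:2} with constant $C=2\lambda(1+c\delta x^{4})$; the hypothesis $\delta\tau\leq 1/(2\lambda)$ ensures that $\lambda\delta\tau\leq 1/2$, which keeps this constant well-controlled uniformly in the refinement.

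For the second (simple-root / root-separation) condition, I would evaluate
\[
|p_{+}-p_{-}|=\frac{2\,|\sqrt{\gamma_{1}^{2}+\gamma_{0}\gamma_{2}}|}{|\gamma_{0}|}.
\]
Since $\gamma_{0}\gamma_{2}=1-(\delta\tau A)^{2}$ and $\gamma_{1}^{2}=\lambda^{2}\delta\tau^{2}G_{k}^{2}$, the discriminant is $1-\delta\tau^{2}(A^{2}-\lambda^{2}G_{k}^{2})$. I would split into two regimes in $\theta$: when $\delta\tau|A|$ is small the discriminant is close to $1$ and the roots approach $\pm 1$, giving $|p_{+}-p_{-}|\to 2$; when $\delta\tau|A|$ is large (which happens only for wavenumbers with $\delta x^{-1}$ comparable to $\delta\tau^{-1/2}$), the $(\delta\tau A)^{2}$ term dominates $\gamma_{1}^{2}$ (this is exactly where $\lambda\delta\tau\leq 1/2$ matters, to guarantee that $\gamma_{1}^{2}$ cannot cancel $\gamma_{0}\gamma_{2}$), so $|\sqrt{\gamma_{1}^{2}+\gamma_{0}\gamma_{2}}|/|\gamma_{0}|\to 1$ and the separation remains strictly positive. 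In either regime one obtains a uniform lower bound $a_{1}>0$ independent of $\delta x,\delta\tau$.

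The main obstacle I anticipate is the second bullet of Lemma~\ref{lemma:2}: the uniform lower bound on $|p_{+}-p_{-}|$ must be maintained as $\delta x\to 0$ even though $A$ blows up like $\delta x^{-2}$. The bound in the first part of the proof is comparatively routine — it is a direct consequence of the Vieta identities together with the sign of $\mathrm{Re}(A)$ and Lemma~\ref{lemma:1} — and the restriction $\delta\tau\leq 1/(2\lambda)$ enters precisely to guarantee that the jump-intensity contribution $\gamma_{1}^{2}$ cannot spoil either the magnitude bound $|p_{\pm}|\leq 1+C\delta\tau$ or the separation of the two roots.
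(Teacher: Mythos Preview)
Your treatment of the first condition of Lemma~\ref{lemma:2} is essentially the paper's argument, just reorganized through Vieta's formulas rather than the explicit quadratic formula: both rely on the same sign observation $\mathrm{Re}(A)\le 0$ to get $|\gamma_{2}/\gamma_{0}|\le 1$ and $|\gamma_{0}|\ge 1$, together with Lemma~\ref{lemma:1} for $|\gamma_{1}|\le\lambda\delta\tau$. The paper bounds the roots directly via $|p|\le |\gamma_{2}/\gamma_{0}|^{1/2}+2|\gamma_{1}/\gamma_{0}|\le 1+2\lambda\delta\tau$, which is marginally cleaner than your split into the cases $|p_{+}|\ge 1$ and $|p_{+}|<1$, but the content is the same.

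Where you diverge is in the root-separation condition, and here you have overcomplicated things. Recall that the second bullet of Lemma~\ref{lemma:2} only asks for separation \emph{conditional on} one root satisfying $|p_{1}|>a_{0}$, and you get to choose $a_{0}$. The paper simply takes $a_{0}=1$, so that $|p_{1}|>1$, and then uses the triangle inequality together with the Vieta sum you already computed:
\[
|p_{1}-p_{2}|\ \ge\ 2|p_{1}|-|p_{1}+p_{2}|\ \ge\ 2-2\lambda\delta\tau\ \ge\ 1,
\]
the last step being exactly where the hypothesis $\delta\tau\le 1/(2\lambda)$ enters. No discriminant analysis, no regime split in $\theta$, no asymptotics in $\delta x$ are needed. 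Your proposed two-regime argument (small vs.\ large $\delta\tau|A|$) is not wrong in spirit, but it is left vague at the crossover and does not actually produce a concrete $a_{1}$; the paper's one-line inequality gives $a_{1}=1$ immediately.
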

\begin{proof}
First, some properties of the coefficients $\gamma_{0}, \gamma_{1}$ and $\gamma_{2}$ of amplification polynomial $\Theta$ are proved. Using Lemma.~\ref{lemma:1} in Eq.~(\ref{eq:gammas}) it is observed that
\[
\lvert \gamma_{1} \rvert < \delta \tau \lambda.
\]
Now, we can write
\[
\lvert \gamma_{0} \rvert=\lvert (1-A)-IB \rvert,
\]
where
\[
A=a\frac{cos^{2}(\theta)+4cos(\theta)-5}{\delta x^2(2+cos(\theta))}-(r+\lambda),\:\:\: and \:\:\:B=b\frac{3sin(\theta)}{\delta x(2+cos(\theta))}.
\]
This implies
\[
\lvert \gamma_{0} \rvert^{2}=1+A^2-2A+B^2.
\]
Since $\frac{cos^{2}(\theta)+4cos(\theta)-5}{\delta x^2(2+cos(\theta))} <0$, $a > 0$, $(r+\lambda)>0$ $\implies$ $A < 0$, therefore $\lvert \gamma_{0} \rvert > 1 $. Similarly
\[
\left \lvert \frac{\gamma_{2}}{\gamma_{0}} \right \rvert^{2}=\frac{1+A^2+2A+B^2}{1+A^2-2A+B^2}.
\]
 Again $A < 0$ $\implies$ $\left \lvert \frac{\gamma_{2}} {\gamma_{0}} \right \rvert < 1$. Now, roots of the amplification polynomial $\Theta$ can be written as
\begin{equation}
\begin{split}
\lvert p\rvert & = \left \lvert\frac{\gamma_{1}\pm \sqrt{\gamma_{1}^2-\gamma_{0}\gamma_{2}}}{\gamma_{0}}\right \rvert, \\
& \leq \left \lvert\frac{\gamma_{2}}{\gamma_{0}}\right \rvert^{\frac{1}{2}}+ 2\left \lvert\frac{\gamma_{1}}{\gamma_{0}}\right\rvert,\\
& \leq 1+2\delta \tau\lambda.
\end{split}
\end{equation}
Hence, first part of the Lemma~\ref{lemma:2} is proved for constant $C=2\lambda$. Now for second part of the Lemma.~\ref{lemma:2}, let us assume that $p_{1}$ and $p_{2}$ are two roots of amplification polynomial $\Theta$. Take the constant $a_{0}=1$ which will imply that $p_{1}>1$, then
\begin{equation}
\begin{split}
|p_{1}-p_{2}| & \geq 2|p_{1}|-|p_{1}+p_{2}|,\\
& \geq 2-2\delta \tau\lambda.
\end{split}
\end{equation}
If $\delta \tau$ satisfies the given condition, we have
\[
|p_{1}-p_{2}|\geq 1,
\]
and this prove the second part of the Lemma.~\ref{lemma:2} with $a_{1}=1$. This completes the proof.
\end{proof}
\section{Numerical Results}
\label{sec:numerical}
\par In this section, the applicability of the proposed compact finite difference method for pricing European and American options under jump-diffusion models is demonstrated. According to \cite{thomee70}, fourth-order convergence cannot be expected for non-smooth initial conditions. Since the initial conditions given in Equations~(\ref{eq:initial_eur}) and~(\ref{eq:initial_amer}) have low regularity, the smoothing operator $\phi_4$ given in \cite{thomee70} is employed to smoothen the initial conditions and it's Fourier transform is define as
\[
\hat{\phi}_{4}(\omega)=\left(\frac{sin(\omega/2)}{\omega/2}\right)^4\left[1+\frac{2}{3}sin^2(\omega/2)\right].
\]
As a result, the following smoothed initial condition $(\tilde{u}_{0})$ is obtained
\begin{equation}
\label{eq:smoothed}
\tilde{u}_{0}(x_1)=\frac{1}{\delta x}\int_{-3\delta x}^{3\delta x}\phi_{4}\left(\frac{x}{\delta x}\right)u_{0}(x_1-x)dx,
\end{equation}
where $u_{0}$ is the actual non-smooth initial condition and $x_{1}$ is the grid point where smoothing is required. The smoothed initial conditions obtained from Equation~(\ref{eq:smoothed}) tends to the original initial conditions as $\delta x\rightarrow 0$. The parameters considered for pricing European and American options under Merton jump-diffusion model are listed in Table~\ref{table:parameter}. The parabolic mesh ratio $(\frac{\delta \tau}{\delta x^2})$ is fixed as $0.4$ in all our computations, although neither the von Neumann stability analysis nor the numerical experiments showed any such restriction. The relative $\ell^2$-error $\frac{||U_{ref}-U||_{\ell^2}}{||U_{ref}||_{\ell^2}}$ is used to determine the numerical convergence rate, where $U_{ref}$ represents the numerical solution on a fine grid $(\delta x = 4.8828125e-04)$ and $U$ denotes the numerical solution on coarser grid. Order of convergence is obtained as the slope of the linear least square fit of the individual error points in the loglog plot of error versus number of grid points.
\begin{table}[h!]
	\begin{center}
		\begin{tabular}{ P{2 cm} P{5 cm} }
			\hline
			& \hspace{-3cm}European and American options \\
			\hline
			Parameters & Values \\
			\hline
			$\lambda$ & $0.10$  \\
			$T$ & $0.25$ \\
			$r$ & $0.05$  \\
			$K$ & $100$ \\
			$\sigma$ & $0.15$ \\
			$\mu_{J}$ & $-0.90$ \\
			$\sigma_{J}$ & $0.45$  \\
			$S_0$ & $100$ \\
			\hline
		\end{tabular}
	\end{center}
	\caption{The values of parameters for pricing European and American options under jump-diffusion models.}
	\label{table:parameter}
\end{table}
\par In option pricing, Greeks are important instruments for the measurement of an option position's risks. The rate of change of option price with respect to change in the underlying asset's price is known as Delta whereas the rate of change in the delta with respect to change in the underlying price is called as Gamma. The proposed compact finite difference method is considered for valuation of options and Greeks as well in the following examples.
\begin{example}(Merton jump-diffusion model for European put options with constant volatility)
\end{example}
\begin{figure}[h!]
	\begin{center}
		\subfigure[]{%
			\includegraphics[scale=0.50]{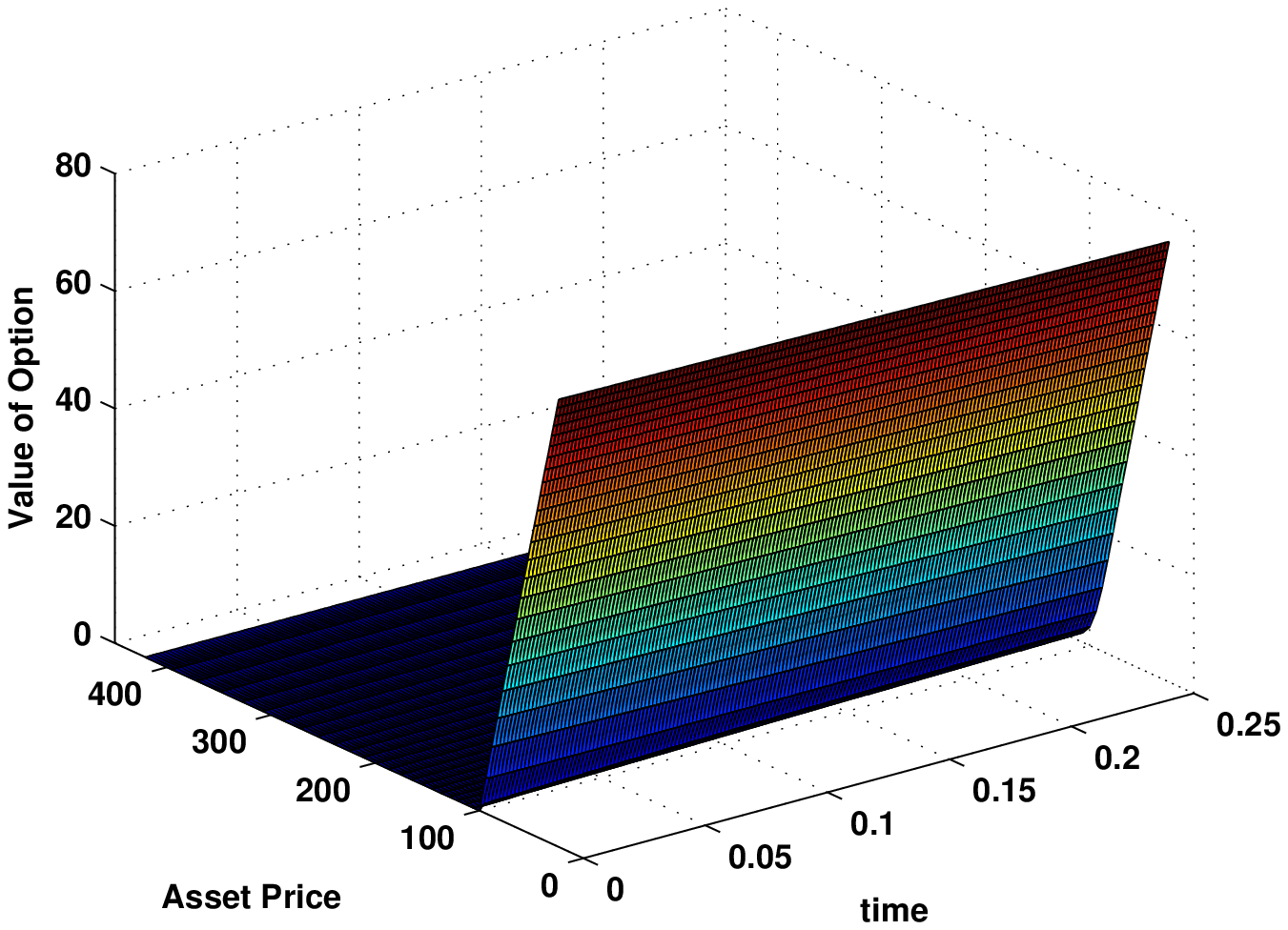}
			\label{fig:price_euro_three1}}%
	     \subfigure[]{%
		\includegraphics[scale=0.50]{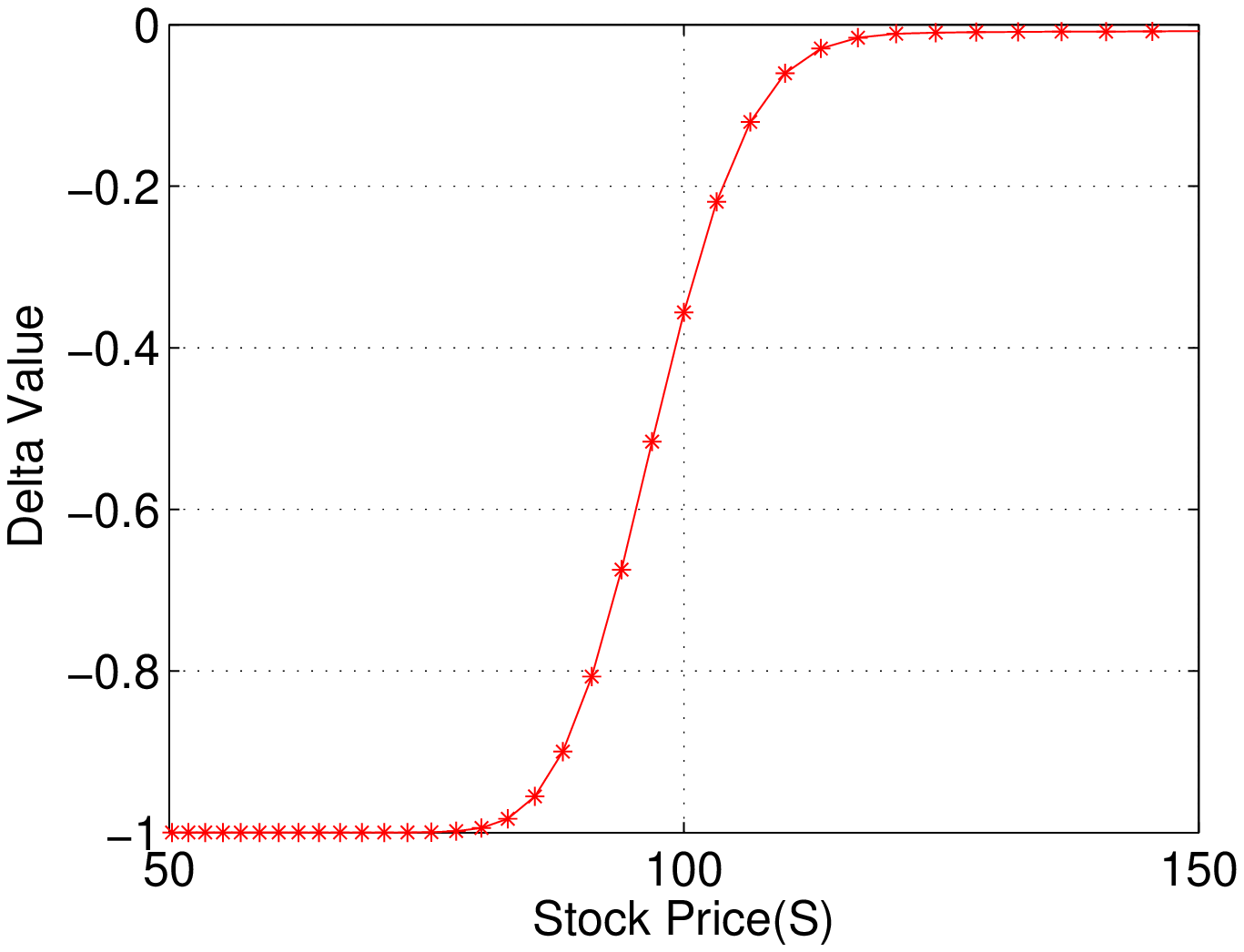}
		\label{fig:delta}}
	     \subfigure[]{%
		\includegraphics[scale=0.50]{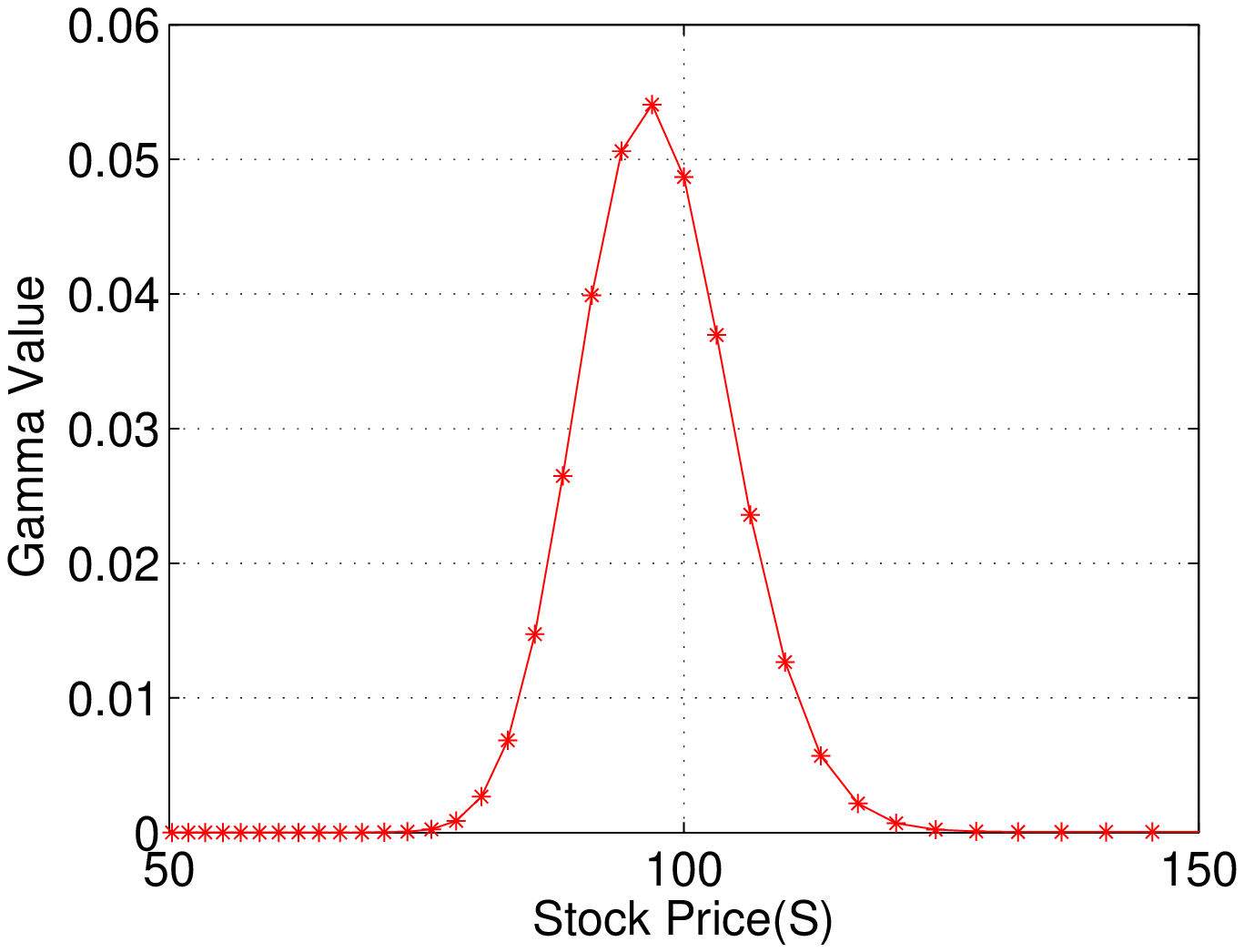}
		\label{fig:gamma}}%
			 \subfigure[]{%
		\includegraphics[scale=0.38]{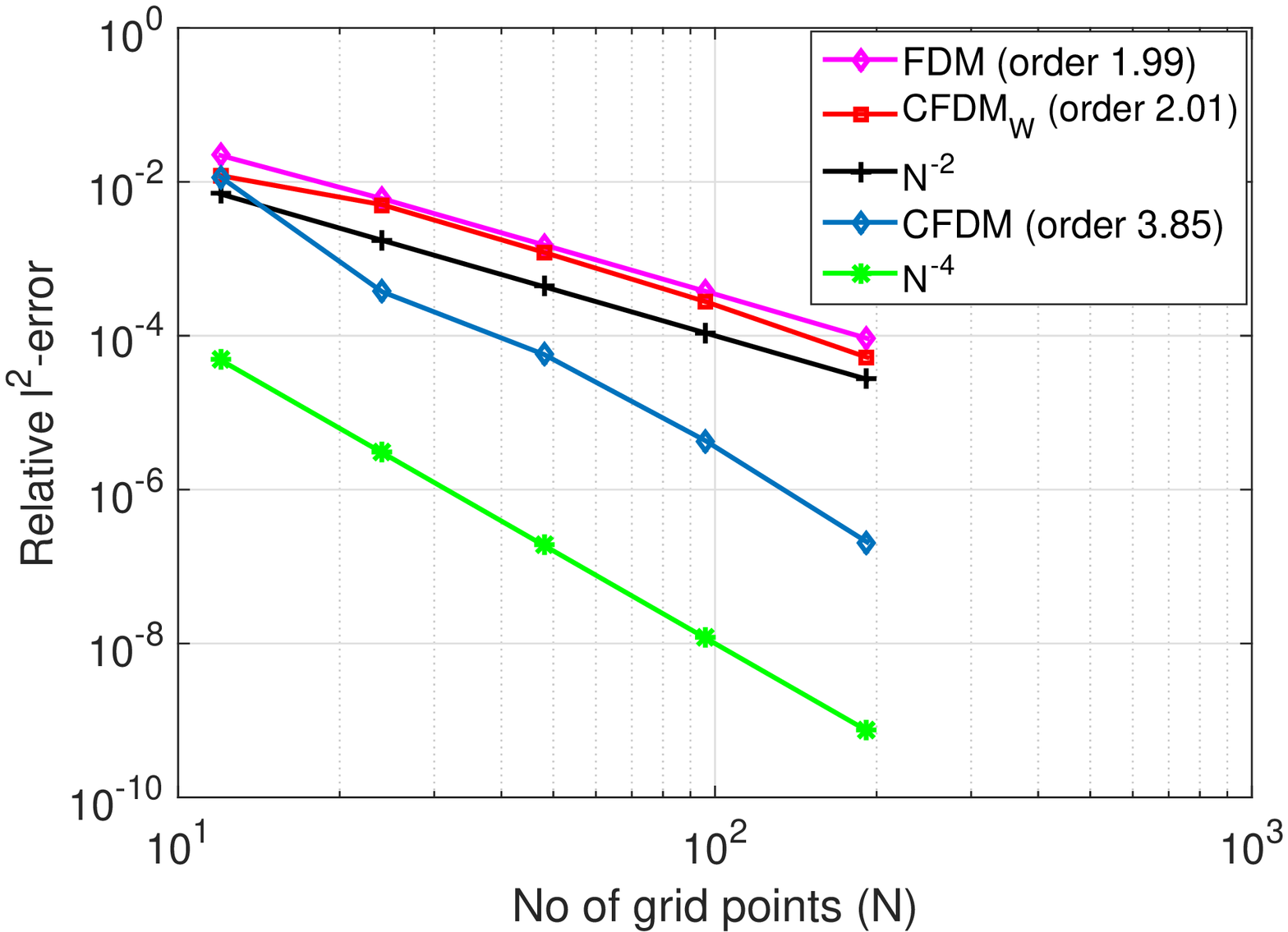}
		\label{fig:rate_european1}}	
		\caption{(a) Prices of European put options as function of stock price and time, (b) Delta of European put options versus stock price, (c) Gamma of European put options versus stock price, and (d) Relative $\ell^2$ error using (i) FDM: finite difference method, (ii). $CFDM_W$: proposed compact finite difference method with non-smooth initial condition, (iii). CFDM: proposed compact finite difference method with smooth initial condition.}
		\label{fig:european1}
	\end{center}
\end{figure}
\begin{table}[h!]
	\begin{tabular}{ m{1cm} | m{4cm} | m{4cm}| m{4cm} }
		\hline
	(S, $\tau$)	& \hspace{1cm}Option Price & \hspace{1cm}Delta & \hspace{1cm}Gamma \\
		\hline
   &  In \cite{Kad16} \hspace{1cm} Our method  &  In \cite{Kad16} \hspace{1cm} Our method &  In \cite{Kad16} \hspace{1cm} Our method\\ 	
		\hline
		 (90,0) & 9.285418 \hspace{1cm} 9.285416 & -0.846715 \hspace{0.8cm} -0.846716 & 0.034860 \hspace{1cm} 0.034862 \\
		\hline
		(100,0) & 3.149026 \hspace{1cm} 3.149018 & -0.355663 \hspace{0.8cm} -0.355661 & 0.048825 \hspace{1cm} 0.048828 \\
		\hline
		(110,0) & 1.401186 \hspace{1cm} 1.401182 & -0.058101 \hspace{0.8cm} -0.058103 & 0.012129 \hspace{1cm} 0.012131 \\
		\hline
	\end{tabular}
	\caption{Values of European put options and Greeks under Merton jump-diffusion model with constant volatility using $N=1536$.}
	\label{table:european1}
\end{table}
\begin{figure}[h!]
	\begin{center}
		\subfigure[]{%
			\includegraphics[scale=0.50]{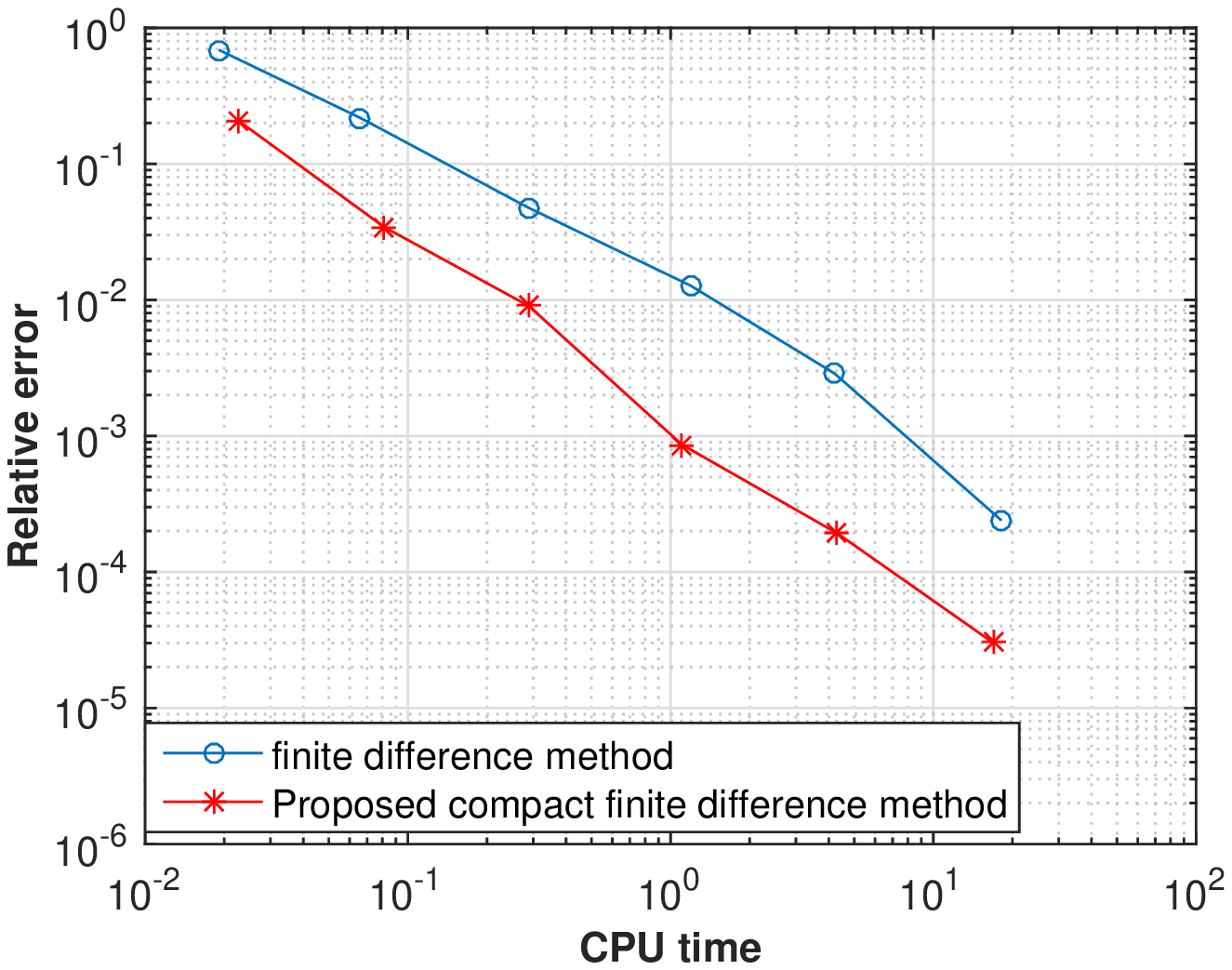}
			\label{fig:cpu_time}}%
		\subfigure[]{%
			\includegraphics[scale=0.50]{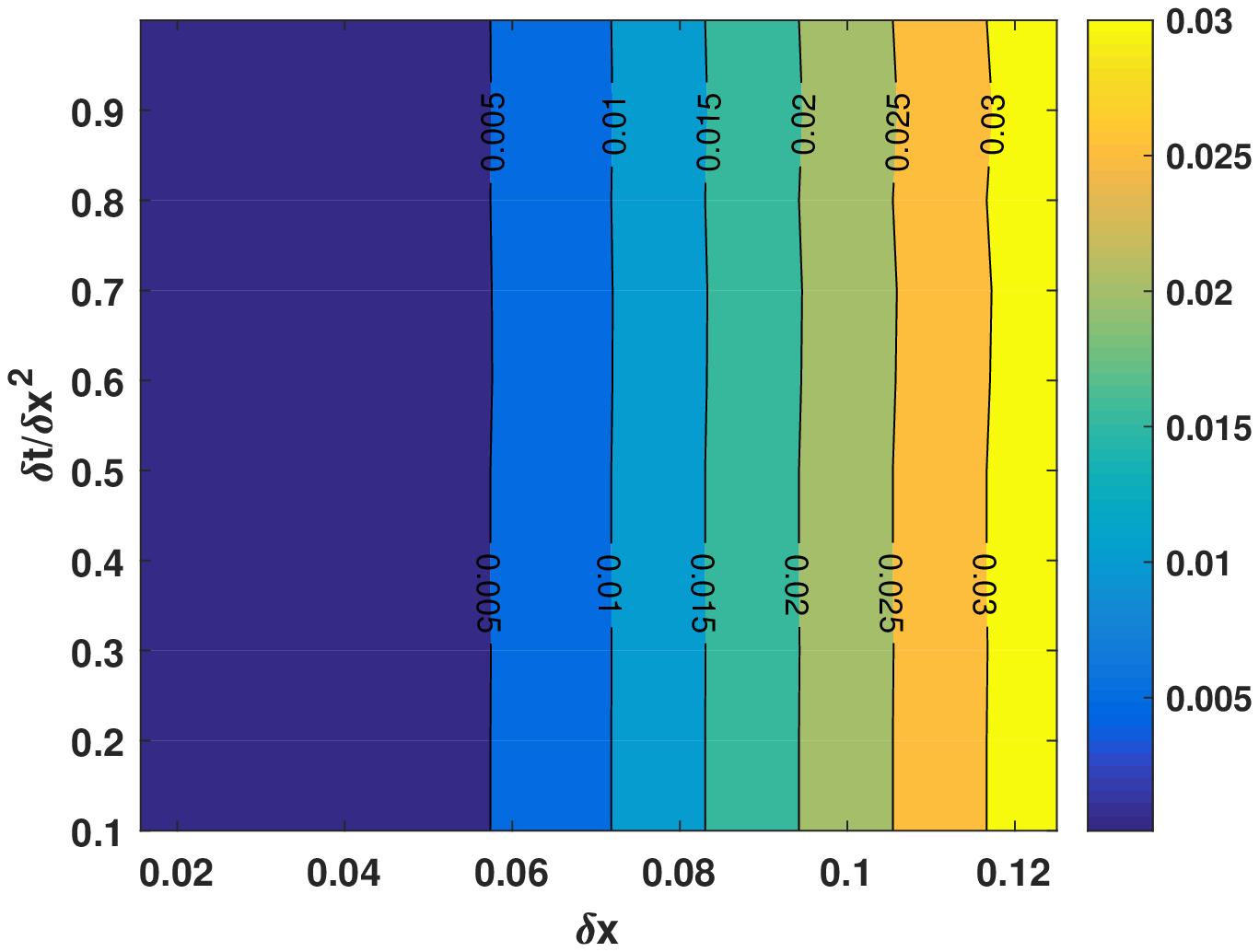}
			\label{fig:numer_stability}}
		\caption{(a) Efficiency: CPU time and relative error for  finite difference method and proposed compact finite differential method (b) Numerical Stability Plot.}
		\label{fig:cpu_stab}
	\end{center}
\end{figure}
 \begin{figure}[h!]
	\begin{center}
		\subfigure[]{%
			\includegraphics[scale=0.500]{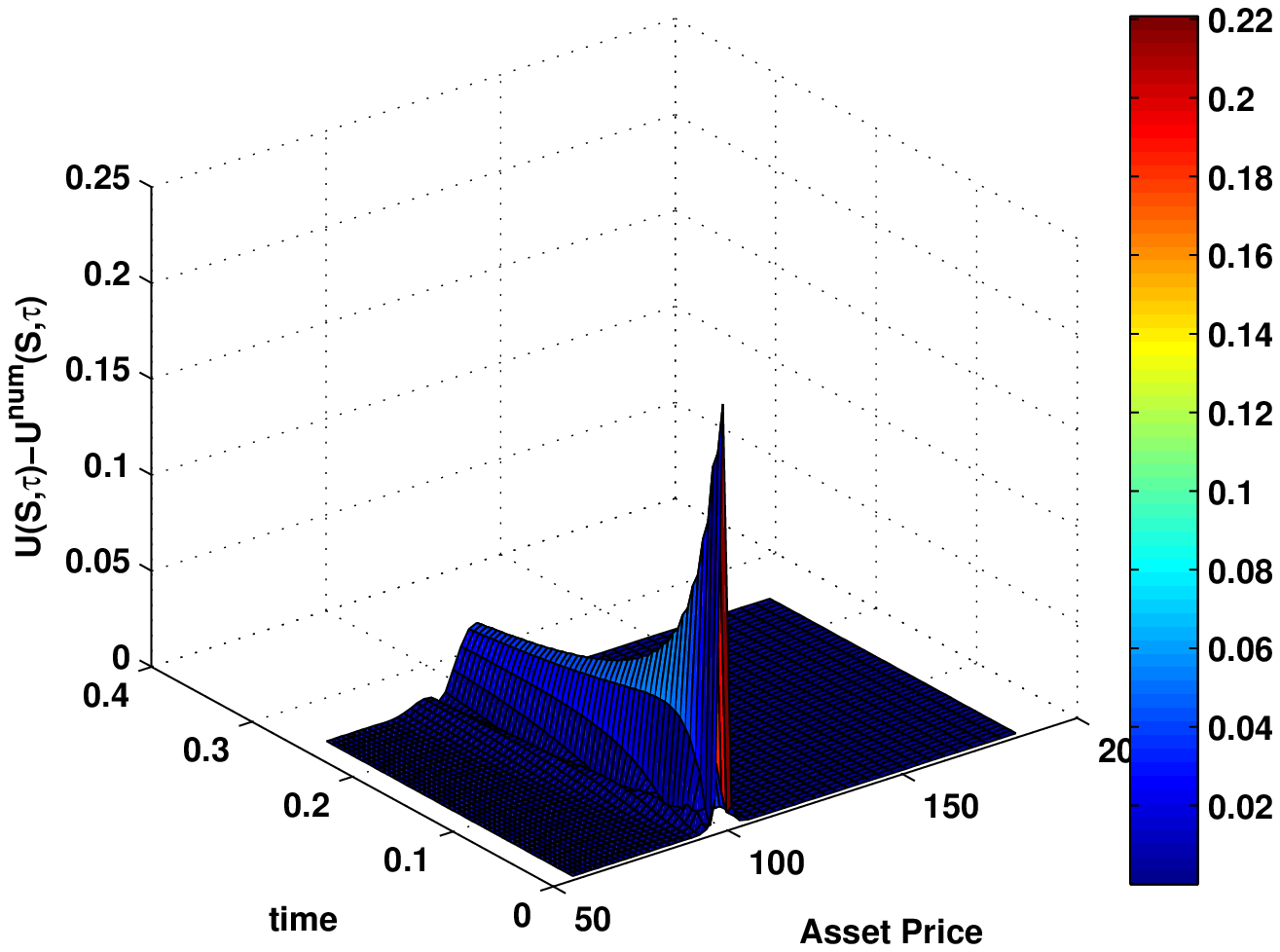}
			\label{fig:err_woutsmo_finite}}%
		\subfigure[]{%
			\includegraphics[scale=0.500]{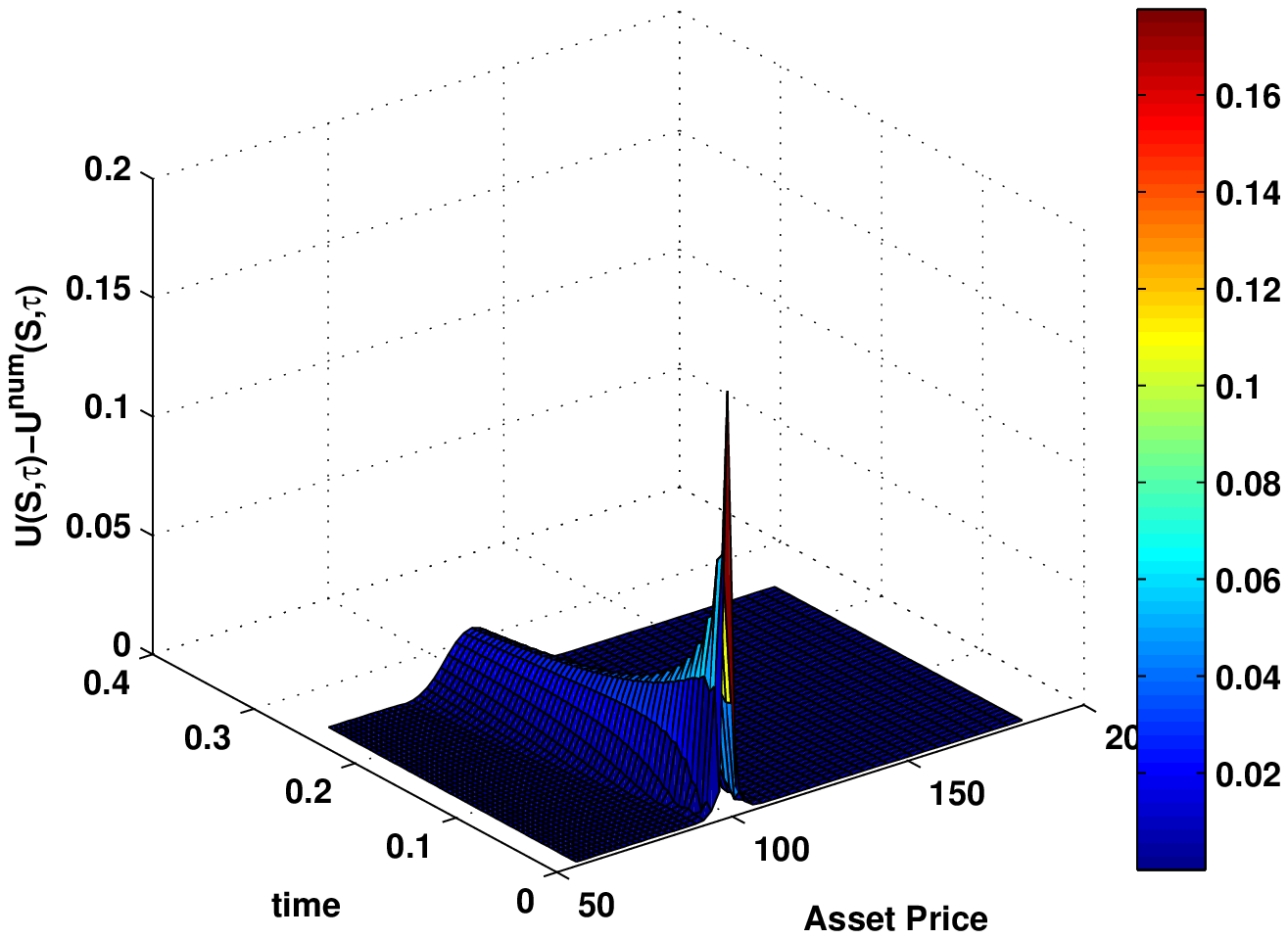}
			\label{fig:err_woutsmo_compact}}
		\subfigure[]{%
			\includegraphics[scale=0.500]{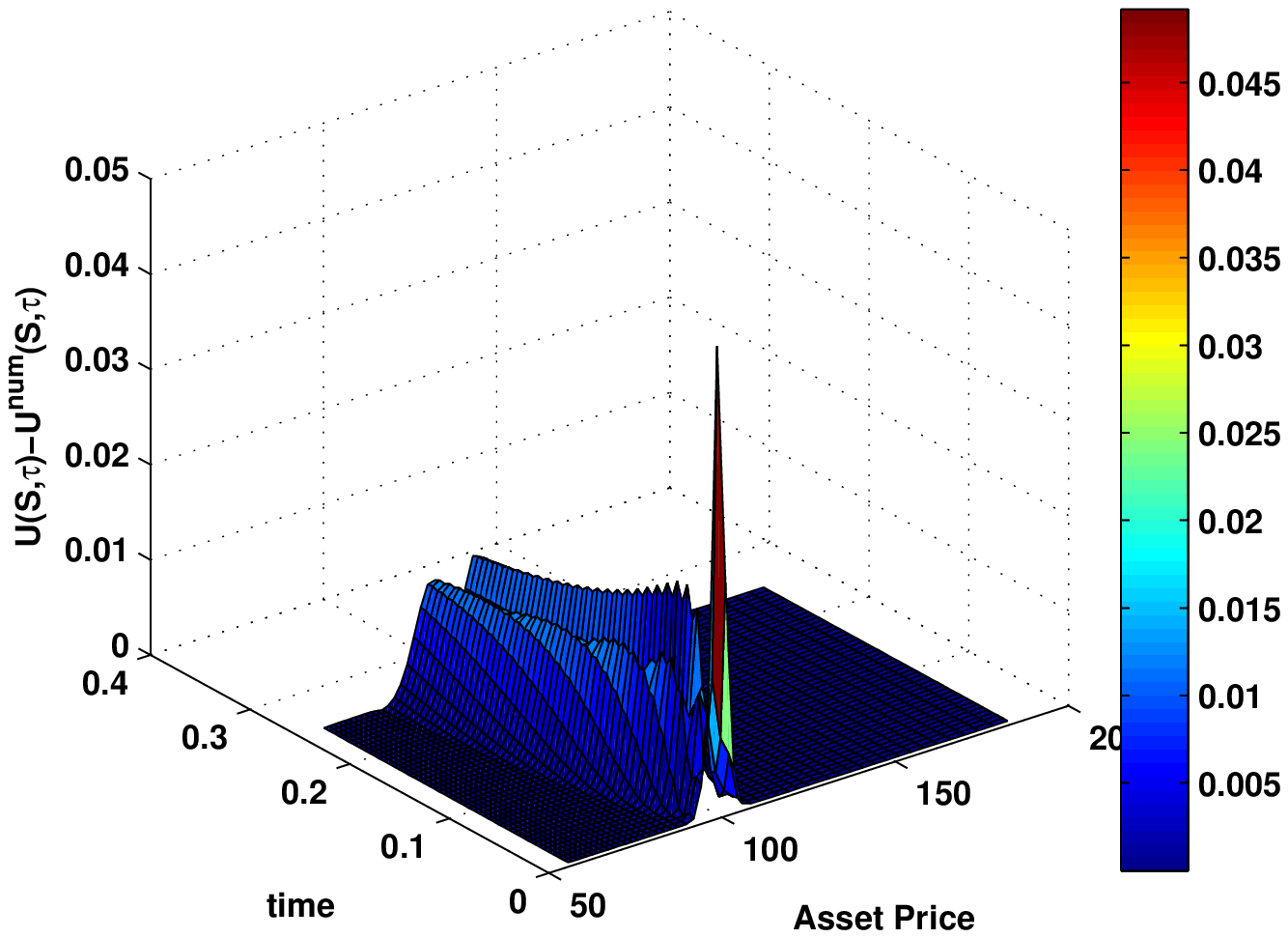}
			\label{fig:err_smo_finite}}%
		\subfigure[]{%
			\includegraphics[scale=0.500]{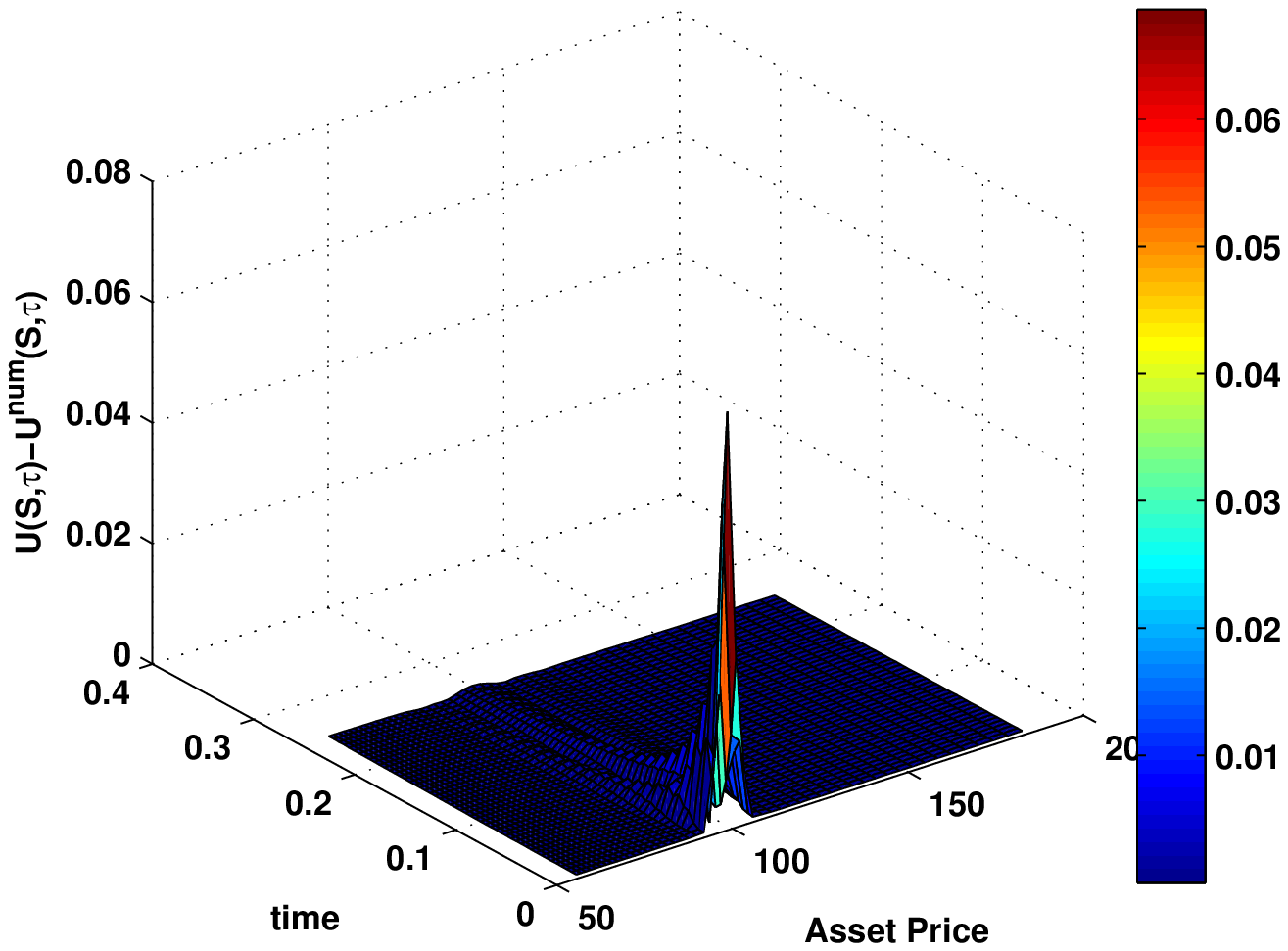}
			\label{fig:err_smo_compact}}
		\caption{The difference between reference and numerical solutions as a function of asset price and time using: (a) finite difference method with non-smooth initial condition, (b) proposed compact finite difference method with non-smoothing initial condition, (c) finite difference method with smoothed initial condition and (d) proposed compact finite difference method with smoothed initial condition.}
		\label{fig:err_smo}
	\end{center}
\end{figure}
\par The values of option prices and Greeks for various stock prices are presented in Table~\ref{table:european1} and it is observed that proposed compact finite difference method is accurate for valuation of options and Greeks as well. Prices of European options and Greeks are plotted in Figs.~\ref{fig:price_euro_three1},~\ref{fig:delta} and~\ref{fig:gamma} respectively. The relative $\ell^2$-errors using finite difference method (second-order accurate) and proposed compact finite difference method are plotted in Fig.~\ref{fig:rate_european1} and it can be concluded that proposed method is only second order accurate with non-smooth initial condition. Further, it is observed that numerical order of convergence rate is in excellent agreement with the theoretical order of convergence of the proposed method when initial condition is smoothed.
\par The PIDE~(\ref{eq:pidediscre}) is also solved using finite difference method \cite{Kwon11} in order to compare the efficiency of proposed compact finite difference method with finite difference method. The relative $\ell^2$ errors between the numerical and reference solutions and corresponding CPU time at grid points $N$=$12, 24, 48, 96, 192$ and $384$ using finite difference method and proposed compact finite difference method are computed and presented in Fig.~\ref{fig:cpu_time}. It is observed from Fig.~\ref{fig:cpu_time} that for a given accuracy, proposed method is significantly efficient as compared to finite difference method. An additional numerical stability test is performed in order to validate the theoretical stability results. The numerical solutions for varying values of the parabolic mesh ratio $(\frac{\delta \tau}{\delta x^2})$ and mesh width $\delta x$ are computed.
Plotting the associated relative $\ell^{2}$ errors should allow us to detect stability restrictions depending on the values of $\delta \tau$ and $\delta x$. The similar approach for numerical stability test is also discussed in \cite{DurF12}. The relative $\ell^{2}$ error is plotted in Fig~\ref{fig:numer_stability} with $\frac{\delta \tau}{\delta x^2} = \frac{k}{10}$, $k = 1, . . . , 10$ for various values of $\delta x$ and it is observed that the influence of the parabolic mesh ratio on the relative $\ell^{2}$ error is only marginal. Thus, we can infer that there does not seem to be any condition on the choices of $\delta \tau$ and $\delta x$.
\par The difference between the reference and numerical solutions as a function of asset price and time with non-smooth initial condition are plotted in Figs.~\ref{fig:err_woutsmo_finite} and~\ref{fig:err_woutsmo_compact} respectively. It is observed from the figures that maximum error at strike price is comparatively smaller with proposed compact finite difference method. Similarly, the difference between reference and numerical solutions with smoothed initial conditions are plotted in Figs.~\ref{fig:err_smo_finite} and~\ref{fig:err_smo_compact}. It is evident from the figures that oscillations in the solution near the strike price are lesser with proposed compact finite difference method.
\begin{example}(Merton jump-diffusion model for European put options with local volatility)
\end{example}
\par In this example, the volatility $\sigma$ is assumed to be a function of stock price and time and is given as
\begin{equation}
\label{eq:vola}
\sigma(x,\tau)=0.15+0.15\left(0.5+2(T-\tau)\right)\frac{\left((S_{0}e^{x}/100)-1.2\right)^2}{\left(S_{0}e^{x}/100\right)^2+1.44}.
\end{equation}
In Table~\ref{table:european}, the values of European options with local volatility for various stock prices are presented. It is observed that option prices obtained using proposed compact finite difference method are in excellent agreement with the reference values.  The values of European options as a function of stock price and time are plotted in Figs.~\ref{fig:price_euro_three}. The relative $\ell^2$-errors using finite difference method and proposed compact finite difference method are plotted in Fig.~\ref{fig:rate_european} and it is observed that proposed method is only second order accurate with non-smooth initial condition. The numerical order of convergence rate agrees with the theoretical order of convergence rate of the proposed method when the initial condition is smoothed.
\begin{figure}[h!]
\begin{center}
\subfigure[]{%
\includegraphics[scale=0.50]{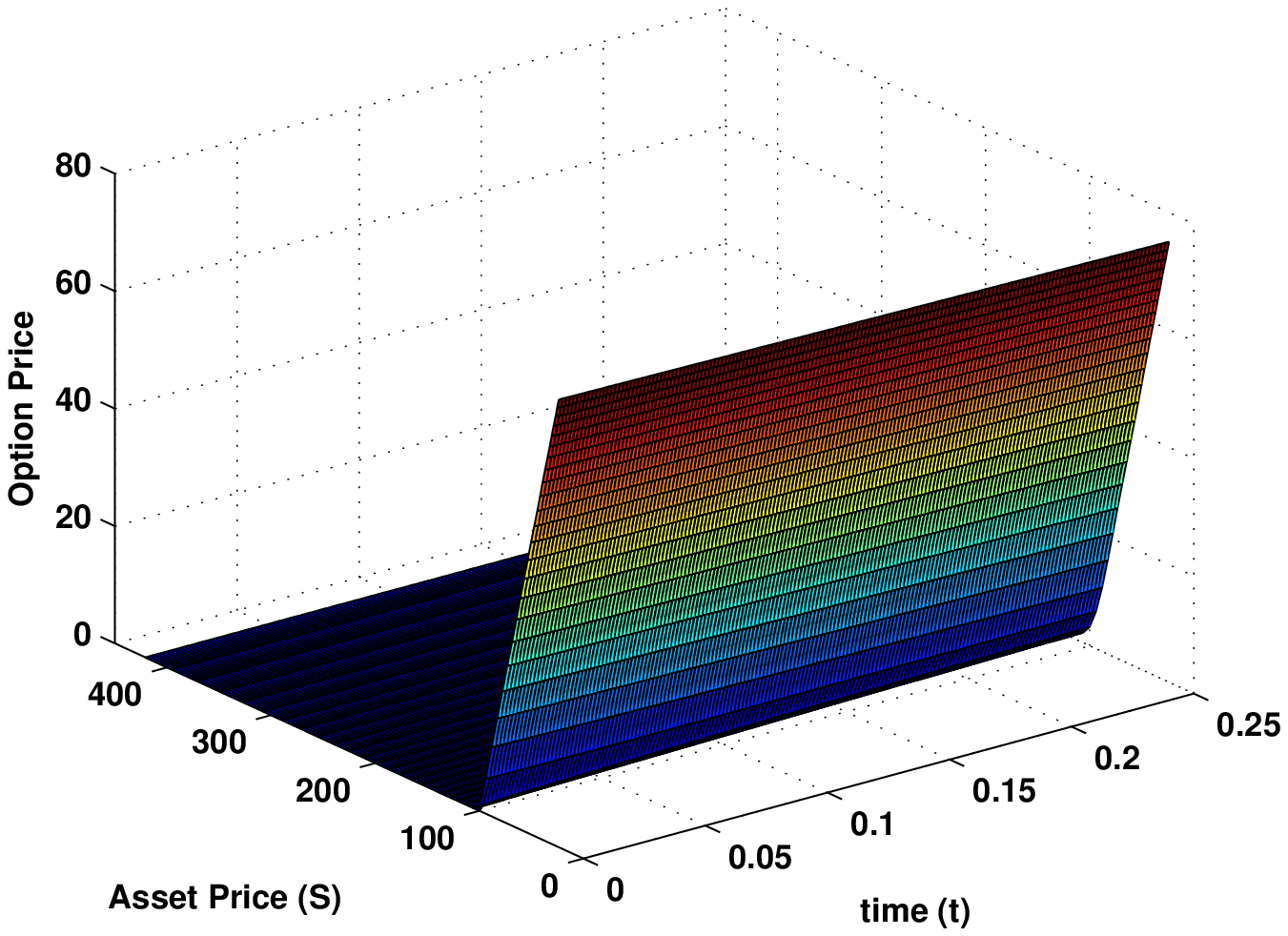}
\label{fig:price_euro_three}}%
\subfigure[]{%
	\includegraphics[scale=0.36]{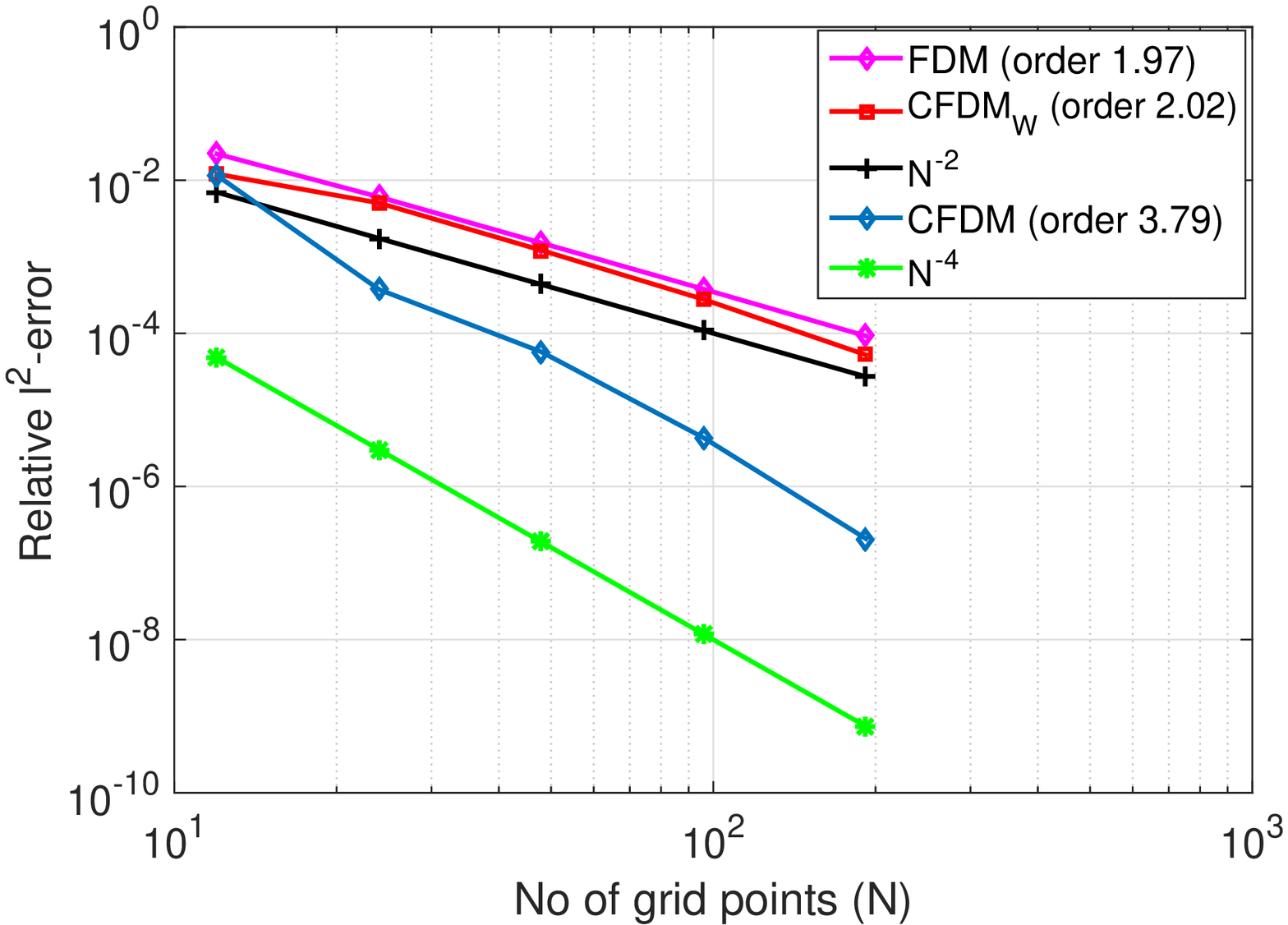}
	\label{fig:rate_european}}
\caption{(a) Prices of European put options with local volatility as a function of asset price and time, (b) Relative $\ell^2$ error with (i) FDM: finite difference method, (ii). $CFDM_W$: proposed compact finite difference method without smoothing the initial condition, (iii). CFDM: proposed compact finite difference method with smooth initial condition.}
\label{fig:european}
\end{center}
\end{figure}
\begin{table}[h!]
\begin{tabular}{ m{4cm} | m{3cm} | m{3cm}| m{3cm} }
\hline
  & (S, $\tau$) = (90,0) & (S, $\tau$) = (100,0) & (S, $\tau$) = (110,0) \\
 \hline
Reference values \cite{JLee15} & 9.317323 & 3.183681 & 1.407745 \\
 \hline
Proposed method & 9.317322 & 3.183682 & 1.407743 \\
 \hline
\end{tabular}
\caption{Values of European put options with local volatility under Merton jump-diffusion model using $N=1536$.}
\label{table:european}
\end{table}
\begin{example}(Merton jump-diffusion model for American put options with constant volatility)
\end{example}
\par The values of American options for various stock prices are presented in Table~\ref{table:american1} and it is observed that proposed compact finite difference method is also accurate for valuation of American options. The values of American options as a function of stock price and time are plotted in Figs.~\ref{fig:price_amer_three1}. The relative $\ell^2$-errors using finite difference method and proposed compact finite difference method are plotted in Fig.~\ref{fig:rate_amer1} and it is observed that proposed method is only second order accurate with non-smooth initial condition. The numerical order of convergence rate is $3.2$ with smoothed initial condition which does not represents the theoretical order of convergence rate. The reason could be the lack of regularity of the problem due to the free boundary feature which needs further research to be resolved \cite{Bastani13}.
\begin{figure}[h!]
\begin{center}
\subfigure[]{%
\includegraphics[scale=0.50]{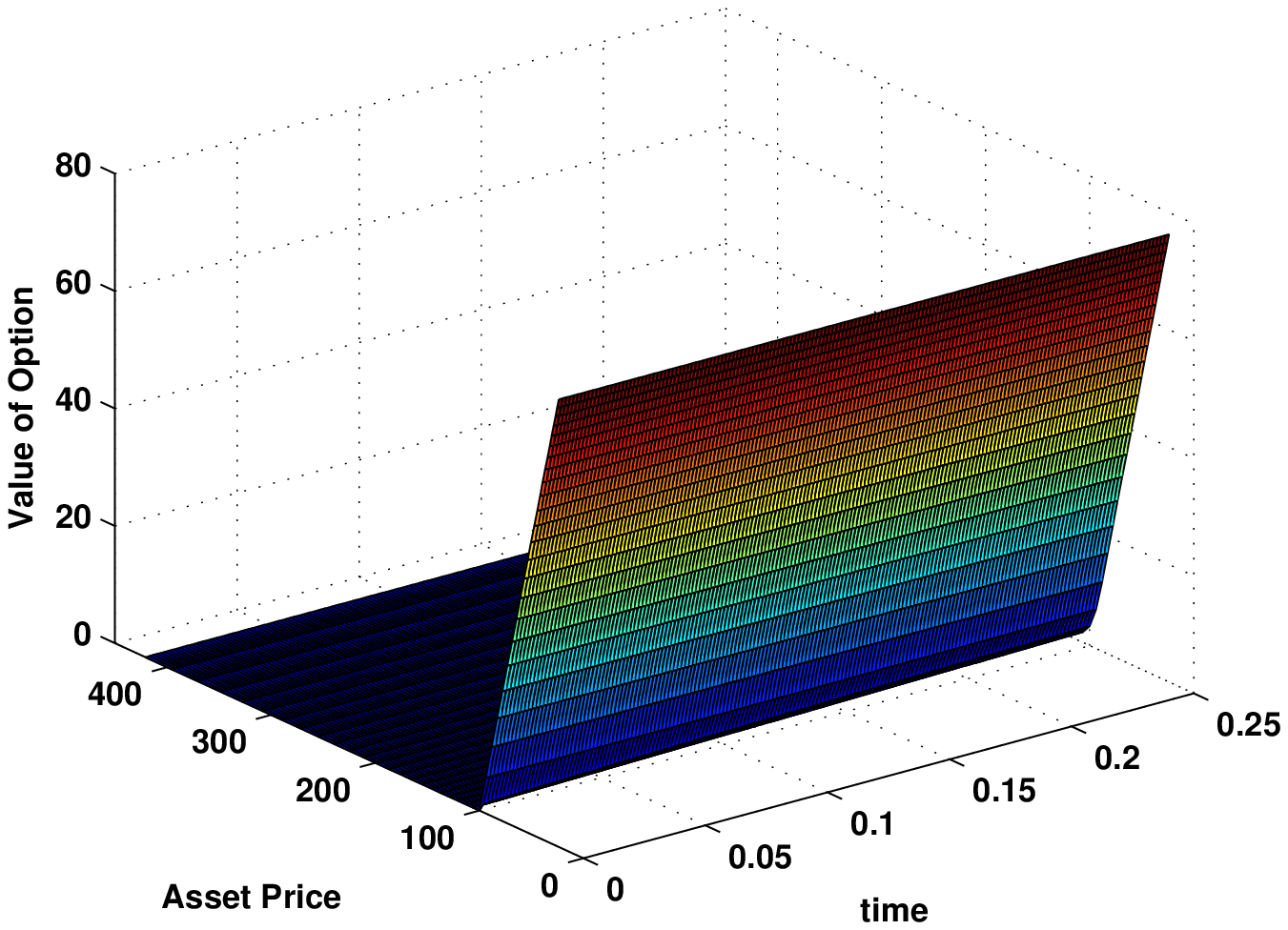}
\label{fig:price_amer_three1}}%
\subfigure[]{%
	\includegraphics[scale=0.36]{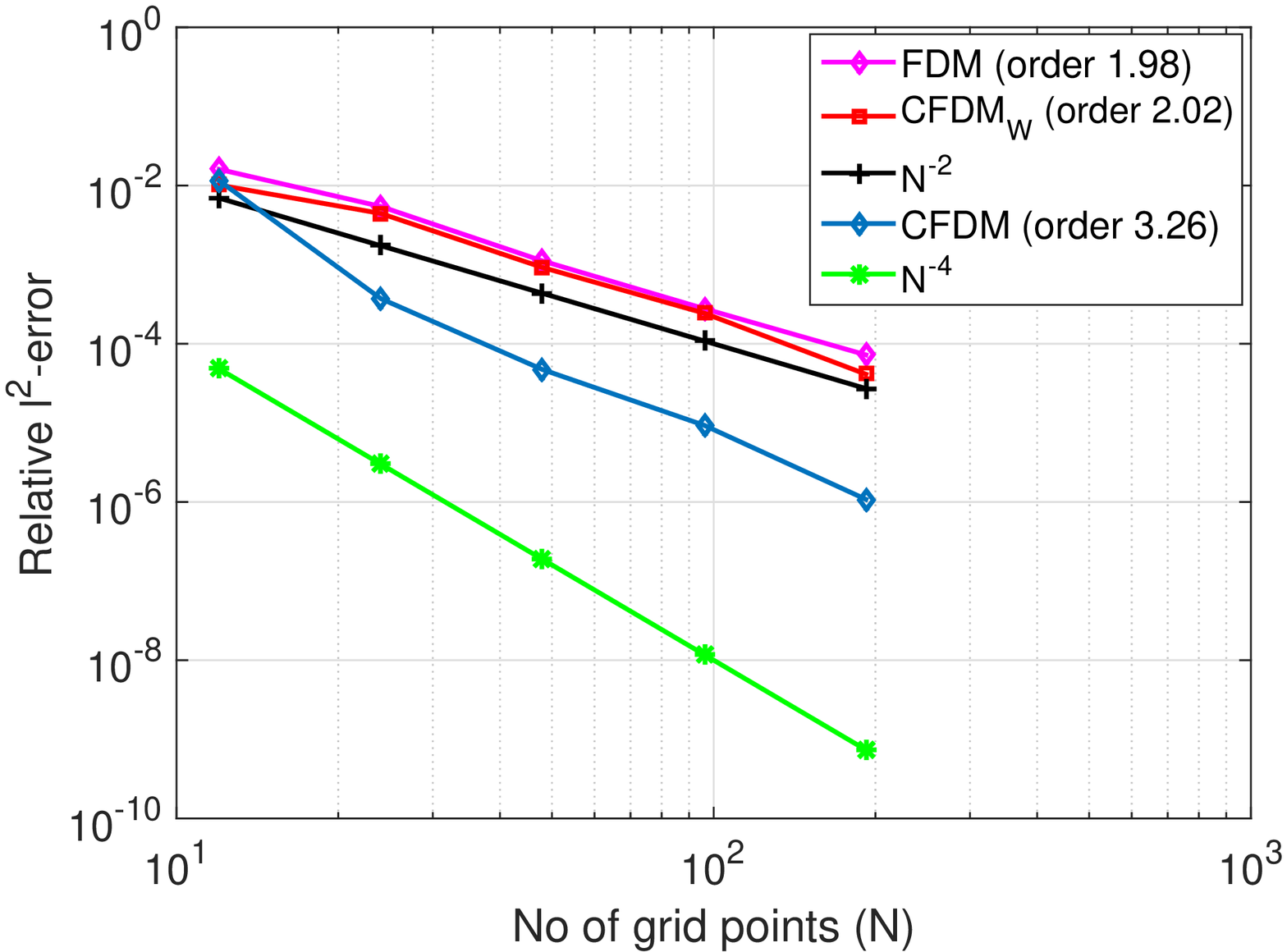}
	\label{fig:rate_amer1}}
\caption{(a) Prices of American put options as a function of asset price and time, (b) Relative $\ell^2$ error with (i) FDM: finite difference method, (ii). $CFDM_W$: proposed compact finite difference method without smoothing the initial condition, (iii). CFDM: proposed compact finite difference method with smooth initial condition.}
\label{fig:american1}
\end{center}
\end{figure}
\begin{table}[h!]
\begin{tabular}{ m{4cm} | m{3cm} | m{3cm}| m{3cm} }
\hline
  & (S, $\tau$) = (90,0) & (S, $\tau$) = (100,0) & (S, $\tau$) = (110,0) \\
 \hline
Reference values \cite{JLee15} & 10.003866 & 3.241207 & 1.419790 \\
 \hline
Proposed compact scheme & 10.003862 & 3.241208 & 1.419791\\
 \hline
\end{tabular}
\caption{Values of American put options under Merton jump-diffusion model with $N=1536$.}
\label{table:american1}
\end{table}
\begin{example}(Merton jump-diffusion model for American put options with local volatility)
\end{example}
\begin{figure}[h!]
\begin{center}
\subfigure[]{%
\includegraphics[scale=0.50]{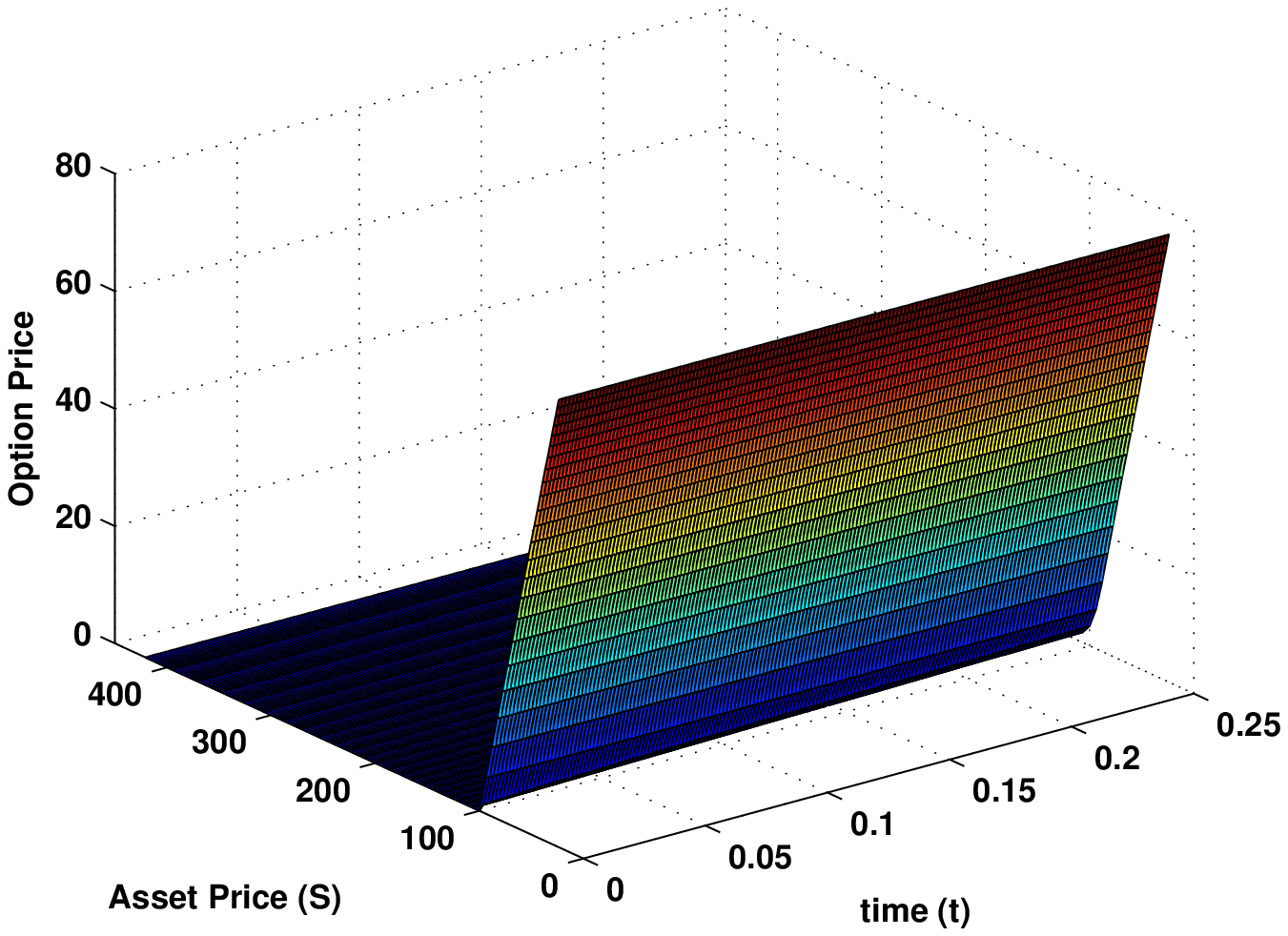}
\label{fig:price_amer_three}}%
\subfigure[]{%
	\includegraphics[scale=0.36]{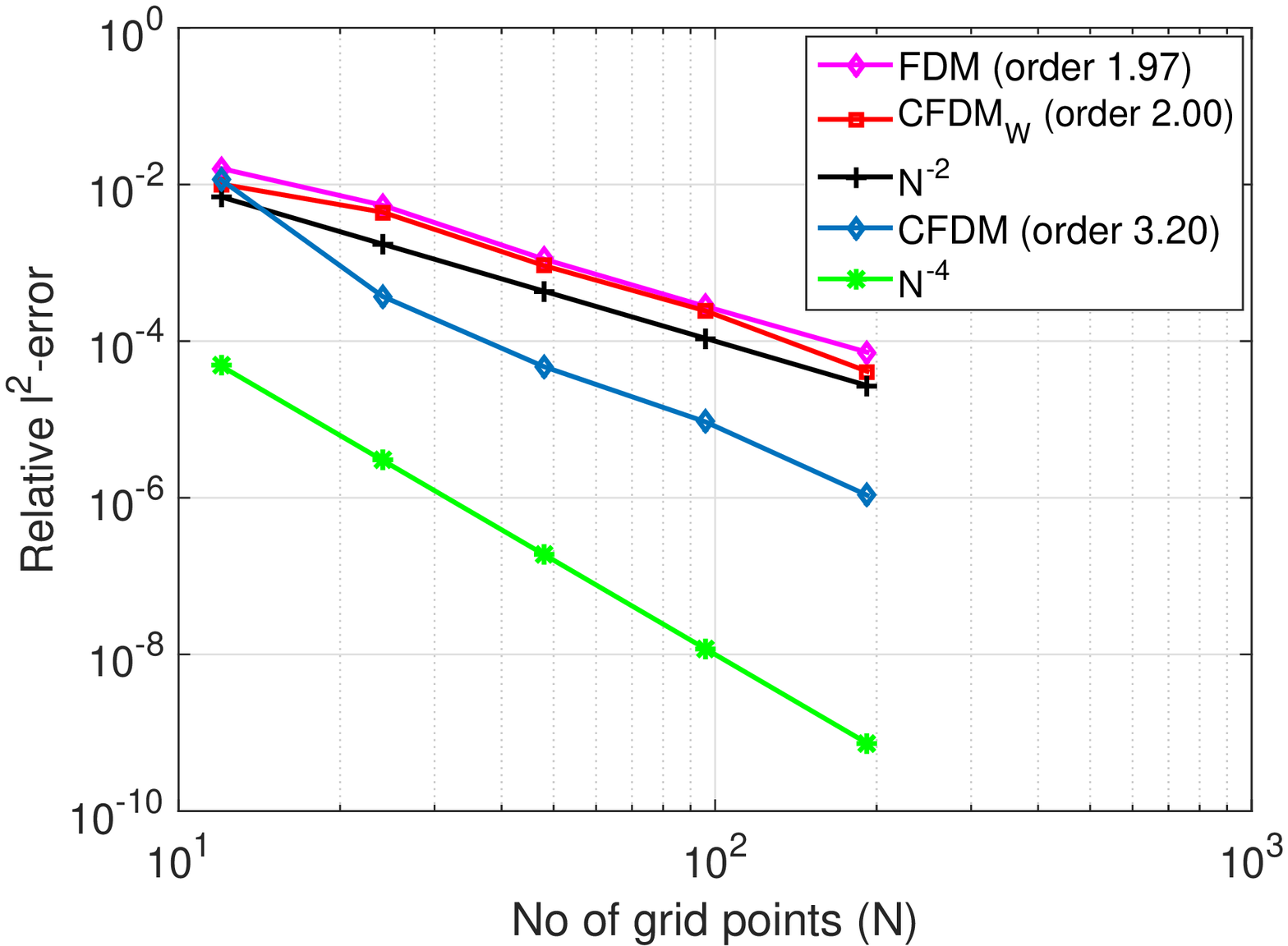}
	\label{fig:rate_amer}}
\caption{(a) Prices of American put options with local volatility as a function of asset price and time, (b) Relative $\ell^2$ error with (i) FDM: finite difference method, (ii). $CFDM_W$: proposed compact finite difference method without smoothing the initial condition, (iii). CFDM: proposed compact finite difference method with smooth initial condition.}
\label{fig:american}
\end{center}
\end{figure}
\begin{table}[h!]
	\begin{tabular}{ m{4cm} | m{3cm} | m{3cm}| m{3cm} }
		\hline
		& (S, $\tau$) = (90,0) & (S, $\tau$) = (100,0) & (S, $\tau$) = (110,0) \\
		\hline
		Reference values \cite{JLee15} & 10.008881 & 3.275957 & 1.426403 \\
		\hline
		Proposed compact scheme & 10.008880 & 3.275955 & 1.426403 \\
		\hline
	\end{tabular}
	\caption{Values of American put options with local volatility under Merton jump-diffusion model using $N=1536$.}
	\label{table:american}
\end{table}
\par In Table~\ref{table:american}, the values of American options with non-constant volatility are presented for various stock prices. It can be concluded that proposed method is also accurate for valuation of American options with non-constant volatility. Figs.~\ref{fig:price_amer_three} presents the values of American options as a function of stock price and time. The relative $\ell^2$-errors using finite difference method and proposed compact finite difference method are plotted in Fig.~\ref{fig:rate_amer}.
\section{Conclusion and future work}
\label{sec:conclu}
\par In this article, a compact finite difference method has been proposed for pricing European and American options under Merton jump-diffusion model with constant and local volatilities. Wave numbers and modified wave numbers for various difference approximations have been discussed and it is observed that compact approximations have better resolution characteristics as compared to finite difference approximations. Consistency and stability of fully discrete problem have also been proved. The effect of non-smooth initial condition on the numerical convergence rate is discussed and it is shown that smoothing of initial condition helps us to achieve high-order numerical convergence rate. Moreover, Greeks (Delta and Gamma) are computed for European options and it is shown that proposed compact finite difference method is accurate for valuation of options and Greeks as well. It would be interesting to extend the proposed compact finite difference method for stochastic volatility jump-diffusion models as a future work.\\
\noindent\textbf{Acknowledgement}: Authors acknowledge the support provided by Department of Science and Technology,
India, under the grant number $SB/FTP/MS-021/2014$.
\addcontentsline{toc}{section}{References}
\bibliographystyle{unsrt}
\bibliography{references}
\end{document}